\newtheorem{lemma}{Lemma}
\newtheorem{theorem}{Theorem}
\newtheorem{definition}{Definition}
\newtheorem{example}{Example}
\newtheorem{corollary}{Corollary}
\newcommand{\argmin}{\textup{argmin}}
\newcommand{\cU}{{\mathcal{U}}}
\newcommand{\cS}{{\mathcal{S}}}
\newcommand{\cF}{{\mathcal{F}}}
\newcommand{\diver}{\textup{div}}
\newcommand{\lsf}{\ell^*_{\text{fair}}}
\newcommand{\rsf}{r^*_{\text{fair}}}
\newcommand{\comb}{M}
\newcommand{\setA}{\cS^{+}}
\newcommand{\setB}{\cS^{-}}
\newcommand{\gonzalez}{\textup{GMM}\xspace}
\newcommand{\fMM}{\textsc{Fair Max-Min}\xspace}
\newcommand{\fMMG}{\textsc{Fair$^{+}$ Max-Min}\xspace}
\newcommand{\fMMSwap}{\textsc{Fair-Swap}\xspace}
\newcommand{\fMMGSwap}{\textsc{Fair$^{+}$-Swap}\xspace}
\newcommand{\fMMFlow}{\textsc{Fair-Flow}\xspace}
\newcommand{\fMMGFlow}{\textsc{Fair$^{+}$-Flow}\xspace}
\newcommand{\fGMM}{\textsc{Fair-GMM}\xspace}
\newcommand{\fClust}{\textsc{Fair-Flow-Clust}\xspace}
\newcommand\mycommfont[1]{\ttfamily\textcolor{gray}{#1}}
\algnewcommand{\LineComment}[1]{\State{\mycommfont{$\triangleright${#1}}}}
\algnewcommand{\LineCommentx}[1]{\Statex{\mycommfont{$\triangleright${#1}}}}
\DeclareMathOperator*{\maximize}{maximize}
\title{	
\LARGE Diverse Data Selection under Fairness Constraints\footnote{University of Massachusetts Amherst, \{zmoumoulidou, mcgregor, ameli\}@cs.umass.edu}
}
\author{
  \large Zafeiria Moumoulidou \qquad Andrew McGregor \qquad Alexandra Meliou\\
  } 
\date{} 
\begin{document}
	\maketitle
	
	\begin{abstract}
	Diversity is an important principle in data selection and summarization,
	facility location, and recommendation systems. Our work focuses on maximizing
	diversity in data selection, while offering fairness guarantees. In
	particular, we offer the first study that augments the Max-Min 	diversification objective with fairness constraints. More specifically, 	given a universe $\mathcal{U}$ of $n$ elements that can be partitioned into 	$m$ disjoint groups, we aim to retrieve a $k$-sized subset that maximizes the 	pairwise minimum distance within the set (\emph{diversity}) and contains a
	pre-specified $k_i$ number of elements from each group $i$ (\emph{fairness}).
	We show that this problem is NP-complete even in metric spaces, and we 	propose three novel algorithms, linear in $n$, that provide strong theoretical
	approximation guarantees for different values of $m$ and $k$. Finally, we
	extend our algorithms and analysis to the case where groups can be 	overlapping.
	\end{abstract}	
	
	\section{Introduction} \label{sec:intro} 
Data is generated and collected from all aspects of human activity, in domains
like commerce, medicine, and transportation, as well as scientific
measurements, simulations, and environmental monitoring. However, while
datasets grow large and are readily available, they are often down-sampled for
various uses. This is often due to practical implications, e.g., analytics
workflows may be designed, tested, and debugged over subsets of the data for
efficiency reasons. Other times, machine learning applications use subsets of
the data for training and testing, while applications that target human
consumption, e.g., data exploration, can only display small parts of the data
at a time, since human users can visually process limited information.

\begin{figure}[t]
	\centering
	\includegraphics[width=0.9\textwidth]{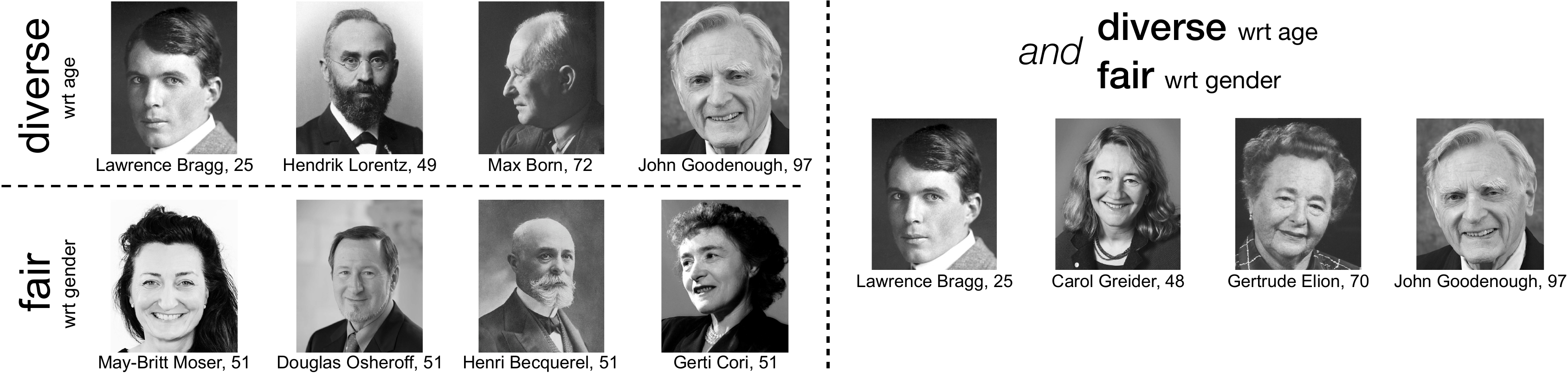}
	\caption{
	Three examples of selection of four items from a dataset of Nobel 	laureates. The first set on the left is diverse with respect to age; the 	second set 	is fair with respect to gender; the third set, on the right, is 	both diverse with respect 	to age and fair with respect to 	gender.}\label{fig:nobelists}
	\vspace{-2mm}
\end{figure} 

While data subset selection is very common, deriving \emph{good} subsets is a
non-trivial task. In this paper, we focus on two principles in data selection:
\emph{diversity} and \emph{fairness}. Diversity and fairness are related but
distinct concepts. Specifically, diversity seeks to maximize the dissimilarity
of the items in a set. Intuitively, a diverse set of items selected from a
dataset $D$ represents more and different aspects of the information present in
$D$. Prior work has suggested several diversity
objectives~\cite{Chandra:2001:AAD:374591.374596,
Hassin:1997:AAM:2308449.2308622, 10.1145/2594538.2594560,
1994:HSC:2753204.2753214}, typically defined in terms of an element-wise
distance function over numerical attributes (e.g., geographic location, age).
On the other hand, fairness aims to achieve some specified level of
representation across different categories or groups, and is typically defined
over categorical attributes (e.g., race, gender). While one could consider
combining fairness and diversity into a single objective, comparing numerical
and categorical attributes is not straightforward, as it typically requires ad
hoc decisions in discretizing numerical attributes or defining a distance
function involving numeric and categorical attributes.

Figure~\ref{fig:nobelists} demonstrates an example of the principles of
diversity and fairness in subset selection. Consider a web search query over a
dataset of Nobel laureates. There are close to a thousand laureates, but the web
search only serves a small number of results for human consumption.
Figure~\ref{fig:nobelists} shows three examples of possible subsets of four
items. The first subset optimizes the set's diversity with respect to the age
of the laureates at the time of the award, but only contains male scientists.
The second set achieves fair gender representation, but is not diverse with
respect to age. The third set achieves both diversity and fairness. The concept
of \emph{fair} and \emph{diverse} data selection is motivated by many
real-world scenarios: \emph{transportation equity} in conjunction with
optimizing traditional objectives (e.g., geographic coverage) aims to
design accessible transportation systems for historically disadvantaged
groups~\cite{Litman2006EvaluatingTE}; formulating \emph{teams} that represent
various demographic groups while demonstrating ``diversity of thought'' is
becoming an important hiring goal~\cite{diversityThought1,
article,diversityThought2}; in \emph{news} websites, a summary of dissimilar in
context documents from different news channels minimizes redundancy and
mitigates the risk of showing a polarized opinion~\cite{article}.

\smallskip

\noindent
\textbf{Our focus.} \looseness -1
In this paper, \emph{our goal is to maximize diversity in data selection} with
respect to numerical attributes, \emph{while ensuring the satisfaction of
fairness constraints} with respect to categorical ones. We focus on the Max-Min
diversification model~\cite{article, 1994:HSC:2753204.2753214, Tamir}, which is
among the most well-studied and frequently-used diversity models in the data
management community. Max-Min diversification seeks to select a set of $k$
items, such that the distance between any two items is maximized. We
further express fairness as cardinality constraints: given $m$ 
demographic groups, a set is \emph{fair} if it contains a
pre-specified integer number $k_i$ of representatives from each group. This
general form of cardinality constraints captures, among others, the common
fairness objectives of \emph{proportional} representation, where the sample
preserves the demographic proportions of the general population, and
\emph{equal} representation, where all demographic groups are equally
represented in the sample.\footnote{Our fairness constraints are based on the
definitions of \emph{group fairness} and \emph{statistical
parity}~\cite{Dwork:2012:FTA:2090236.2090255}. Other definitions that focus on
\emph{individual or causal fairness} examine differences in treatment of
individuals from different groups who are otherwise very similar, but these are
not the focus of this work.} 

We first study the problem of \emph{fair Max-Min diversification} in the case of \emph{non-overlapping} groups, and define the problem more formally as follows: We assume a universe of elements $\mathcal{U} = \bigcup_{i=1}^{m} \mathcal{U}_i$ partitioned into $m$ non-overlapping groups, a metric distance function $d$ defined for any two pairs of elements, and a set of fairness constraints $\langle k_1, k_2, \cdots, k_m\rangle $, where each $k_i$ is a non-negative integer with $k_i \leq |\mathcal{U}_i|$. Our goal is to select a set $\mathcal{S}\subseteq\mathcal{U}$ of size $k=\sum_{i=1}^{m}k_i$, such that $|\mathcal{S}\cap\mathcal{U}_i|=k_i$ for all $i$, and such that the minimum distance of any two items in $\mathcal{S}$ is maximized. In this paper, we show that fair Max-Min diversification is NP-complete, and we contribute efficient algorithms with strong approximation guarantees in the case of non-overlapping groups; we further generalize our results and analysis to the case of overlapping groups. We list our contributions at the end of this section.

\begin{figure}[t]
	\begin{subfigure}[b]{0.65\textwidth}
		\vspace{-10mm}
		\centering
		\resizebox{1.1\columnwidth}{!}{
			\def\arraystretch{1.5}
			\begin{tabular}{|r|c|c|}
				\hline
				& \textbf{Max-Min} & \textbf{Max-Sum}\\ 
				\hline
				\textbf{diversification} & \multicolumn{2}{c|}{
					\begin{tabular}[c]{@{}c@{}}
						\cite{Hassin:1997:AAM:2308449.2308622, 						1994:HSC:2753204.2753214, Tamir}\\ 
						$\tfrac{1}{2}$-approximation
					\end{tabular}}\\ 
				\hline
				\begin{tabular}[c]{@{}c@{}}
					\textbf{fair diversification} \\
					\textbf{(disjoint groups)}
				\end{tabular}
				&
				\cellcolor[gray]{0.9}
					\begin{tabular}[c]{@{}c@{}}
						[this paper]\\
						$\tfrac{1}{4}$-approx. ($m=2$)\\
						$\tfrac{1}{3m-1}$-approx. ($m\ge 3$)\\
						$\tfrac{1}{5}$-approx. ($m=O(1)$ and $k=o(\log n)$)
					\end{tabular} 
				&
				\begin{tabular}[c]{@{}c@{}}
					\cite{Abbassi:2013:DMU:2487575.2487636,Borodin:2012:MDM:2213556.2213580, Ceccarello:2018:FCD:3159652.3159719}\\ 					$\left(\tfrac{1}{2}-\epsilon\right)$-approx.
				\end{tabular}\\
				\hline
				\begin{tabular}[c]{@{}c@{}}
					\textbf{fair diversification} \\
					\textbf{(overlapping groups)}
				\end{tabular}
				&
				\cellcolor[gray]{0.9}
					\begin{tabular}[c]{@{}c@{}}
						[this paper]\\ 
		    			$\tfrac{1}{4}$-approx. ($m=2$)\\
						$\tfrac{1}{3\binom{m}{\lfloor m/2 \rfloor} -1}$-approx. 						($m\ge 3$)
					\end{tabular}
				&
				N/A \\
				\hline
				\multicolumn{3}{c}{\small{$n:$ \# elements in the universe,  				$m:$ \# demographic groups, $k:$ \# elements in the data 				selection task}}
			\end{tabular}
		}
		\vspace{-1mm}
		\captionsetup{justification=centering}
		\caption{Comparison with prior art}\label{fig:related}
		\vspace{-1mm}
	\end{subfigure}
	~
	\begin{subfigure}[b]{0.4\textwidth}
		\centering
		\includegraphics[width=0.65\textwidth]{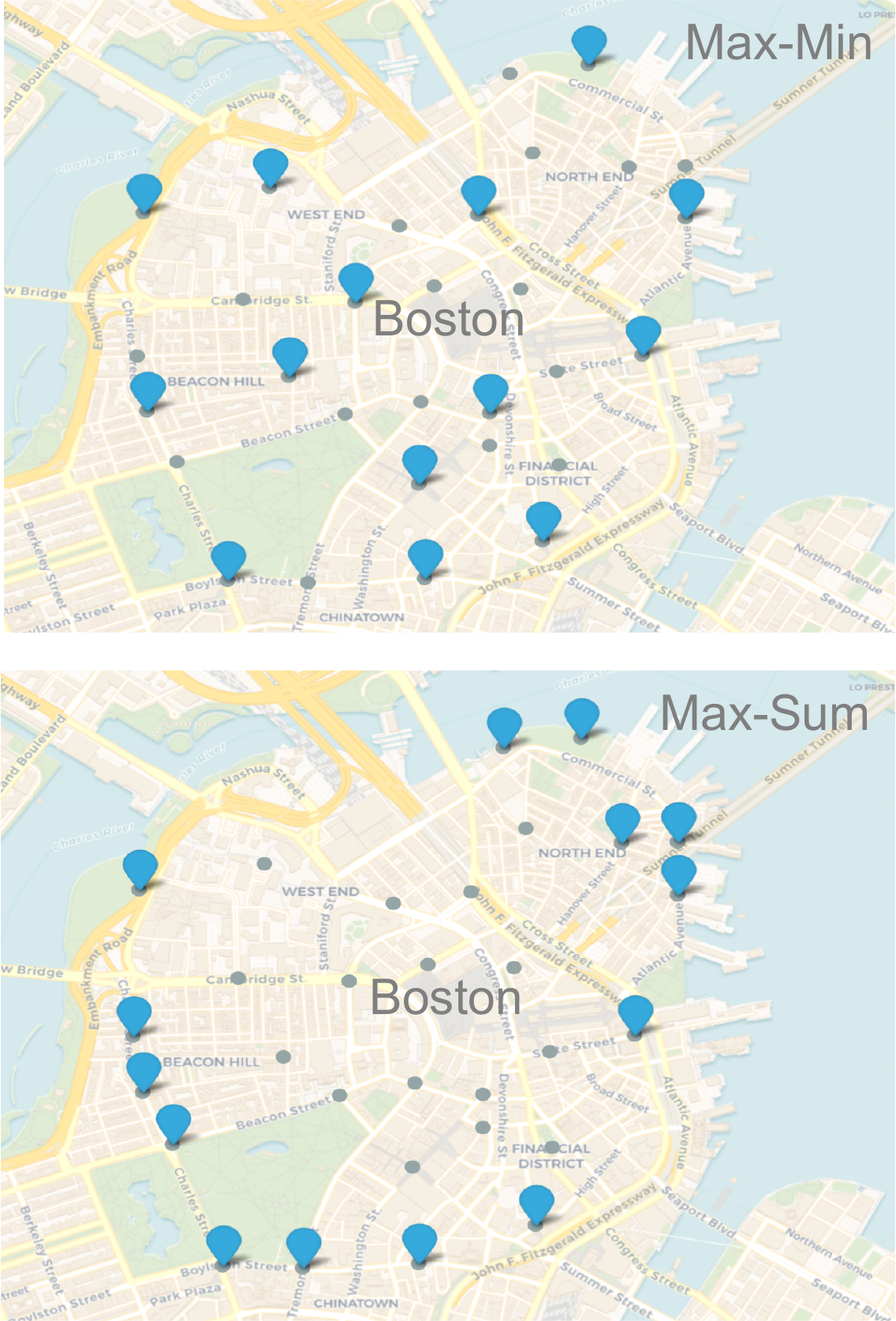}
		\vspace{-1mm}
		\captionsetup{justification=centering}
		\caption{ Max-Min vs Max-Sum}\label{fig:visual}
		\vspace{-1mm}
	\end{subfigure}
	\caption{(a)~Contributions of this paper with respect to the prior art. Our work is the first to introduce fairness constraints to Max-Min diversification, and provides strong approximation results. We also contribute algorithms to the case of overlapping classes, which has not been addressed in prior work. (b)~The department of transportation wants to place $k=14$ new bike 	sharing stations in downtown Boston among $n=30$ candidate locations. (Top):~Max-Min selects locations that geographically \emph{cover} downtown. (Bottom):~Max-Sum selects locations on the outskirts of downtown.}
	\vspace{-2mm}
\end{figure}

\subsection*{Contrast with prior work and related problems} 
Our work augments the existing literature of traditional problems that have been studied under \emph{group fairness} constraints, such as
clustering~\cite{Chierichetti:2017:FCT:3295222.3295256, pmlr-v97-kleindessner19a}, ranking systems~\cite{Celis2017RankingWF, Yang:2019:BRD:3367722.3367886, YangS2017, Zehlike:2017:FFT:3132847.3132938} and set selection~\cite{Stoyanovich2018OnlineSS}. We proceed to review prior work in
closely-related problems and describe how our contributions augment the
existing literature. (Summary shown in Figure~\ref{fig:related}.) 

\smallskip

\noindent 
\textbf{Max-Min and Max-Sum diversification.}
The unconstrained version of Max-Min diversification is a special case of our
fair variant for $m=1$. This problem was initially studied in the operation
research literature under the name \emph{remote-edge} or \emph{p-dispersion},
along with another popular diversity model, the \emph{Max-Sum} or
\emph{remote-clique} model~\cite{Chandra:2001:AAD:374591.374596, ERKUT199048,
Hassin:1997:AAM:2308449.2308622,doi:10.1111/j.1538-4632.1987.tb00133.x,1994:HSC:2753204.2753214}. Similar formulations have also been studied in the context of
obnoxious facility location on graphs~\cite{Tamir}. While the Max-Min model
aims to maximize the minimum pairwise distance in the selected set, the Max-Sum
model aims to maximize the total sum of pairwise distances in a set of $k$
items. Max-Sum, as an additive objective, is easier to analyze but tends to
select points at the limits of the data space and thus it is not well-suited to
applications that require more uniform coverage (see example in
Figure~\ref{fig:visual}). The unconstrained diversification problems are
NP-complete even in metric spaces but, for both, a greedy algorithm offers a
$\tfrac{1}{2}-$factor approximation, that has also been shown to be
tight~\cite{10.5555/3157382.3157556, 10.1145/3086464, 1994:HSC:2753204.2753214}.
 
\smallskip

\noindent 
\textbf{Fair Max-Sum diversification.} \looseness-1
Abbassi et al.~\cite{Abbassi:2013:DMU:2487575.2487636} study the \emph{fair}
Max-Sum diversification problem (assuming disjoint groups) under matroid
constraints, where the retrieved subset needs to be an independent set of a
matroid of size $k$ (we discuss the correspondence between group fairness
constraints and partition matroids in Section~\ref{sec:2.2}). They
propose a local search algorithm with a
$\left(\tfrac{1}{2}-\epsilon\right)$-approximation guarantee. Borodin et
al.~\cite{10.1145/3086464, Borodin:2012:MDM:2213556.2213580} study a
bi-criteria optimization problem formulated as the sum of a submodular function
and the Max-Sum diversification objective under matroid constraints. They show
that the local search approach preserves the
$\left(\tfrac{1}{2}-\epsilon\right)$-approximation guarantee. In an effort to
make the state-of-the-art local search algorithms more efficient, Ceccarello et
al.~\cite{Ceccarello:2018:FCD:3159652.3159719} propose algorithmic approaches
for constructing core-sets with strong approximation guarantees, resulting in
efficient algorithms with comparable quality to the best known local search
algorithms~\cite{Abbassi:2013:DMU:2487575.2487636, 10.1145/3086464,
Borodin:2012:MDM:2213556.2213580}. A core-set is a small subset of the original
data set that contains an $\alpha$-approximate solution for the Max-Sum
diversification problem. Cevallos et al.~\cite{10.5555/3039686.3039695} extend
the local search approach to distances of negative type and design algorithms
with $O\left(1- \tfrac{1}{k}\right)$-approximation and $O(nk^2 \log k)$ running
time.

\smallskip

\noindent 
\textbf{Fair $k$-center clustering.}
In the $k$-center clustering problem the objective is to select $k$ centers such that the maximum distance of any point from its closest cluster center is minimized. Intuitively, cluster centers tend to be distributed in a way that optimizes data coverage. Thus, $k$-center clustering can serve as another mechanism to perform diverse data selection, albeit the optimization objective is different from Max-Min. Max-Min diversification and $k$-center clustering are closely related. In fact, the approximation algorithms by Gonzalez~\cite{Gonzalez1985ClusteringTM} for the clustering problem and by Ravi et al.~\cite{1994:HSC:2753204.2753214} and Tamir~\cite{Tamir} for Max-Min diversification, are all based on the same farthest-first traversal heuristic, and they all provide a $\tfrac{1}{2}$-approximation guarantee. Nonetheless, the analysis of the two algorithms is substantially different and it is not always the case that an algorithm for one problem is applicable to the other.

In recent work, Kleindessner et al.~\cite{pmlr-v97-kleindessner19a} introduced
the fair variant of the problem, where the centers are partitioned into $m$
different groups and the constraint of selecting $k_i$ elements per group is
enforced in the output of the process. It is easy to find examples where no optimal solution for the fair $k$-center problem is optimal for the Max-Min objective and vice versa (see example in Figure~\ref{fig:kCenter}). Furthermore, we note that an optimal solution for fair $k$-center clustering can be arbitrarily bad for the Max-Min objective (e.g.,  Figure~\ref{fig:kCenter2}). Consequently, the two
problems need to be studied independently. The fair k-center clustering problem can also be expressed by a partition matroid, for which Chen et al.~\cite{Chen2016} provide a 3-approximation algorithm with a quadratic runtime. Kleindessner et al.~\cite{pmlr-v97-kleindessner19a} provide a linear-time algorithm with a $\left(3\cdot 2^{m-1}-1\right)$-approximation, while more recent work improved this bound to $3(1+\epsilon)$~\cite{chiplunkar2020solve}, and $3$-approximation~\cite{Jones2020FairKV}. In our Appendix, by adapting the ideas for fair Max-Min diversification, we design a linear-time algorithm for fair $k$-center clustering that also achieves a constant 3-factor approximation.

\begin{figure}[t]
	\begin{subfigure}[t]{0.4\textwidth}
		\centering
		\includegraphics[width=\textwidth]{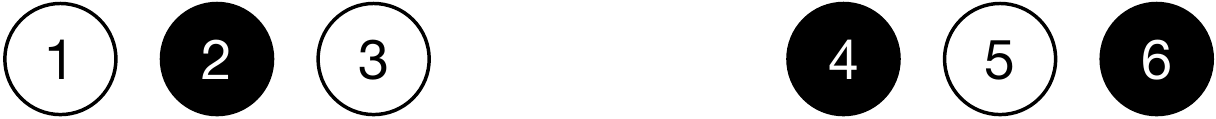}
		\captionsetup{justification=centering} 
		\caption{Example 1}
		\label{fig:kCenter}
		\vspace{-2mm}
	\end{subfigure}
	\hfill
	\begin{subfigure}[t]{0.4\textwidth}
		\centering
		\includegraphics[width=\textwidth]{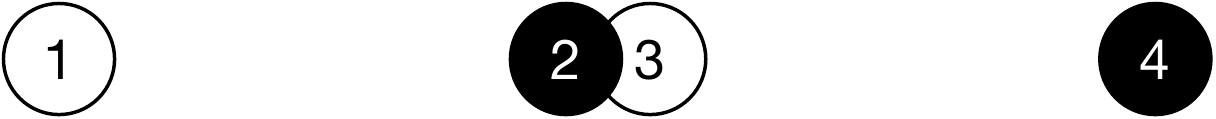} 
		\captionsetup{justification=centering}
		\caption{Example 2}
		\label{fig:kCenter2}
		\vspace{-2mm}
	\end{subfigure}
	\caption{(a)~An example where no optimal solution for the clustering
	problem is optimal for the diversity problem and vice versa. Suppose we have 	to	pick one white point and one black point. The unique optimal solution for
	clustering is $\{2,5\}$ whereas the unique optimal solution for Max-Min
	diversity is $\{1,6\}$. (b)~An optimal solution for the clustering problem 	may be arbitrarily bad for the diversity problem. Suppose we have to pick 	one white and one black point. Set $\{2,3\}$ is an optimal solution for 	clustering but yields an arbitrarily bad approximation ratio for the 	diversity problem as points $2$ and $3$ can be arbitrarily close together.
}
	\vspace{-1mm}	
\end{figure}	

\smallskip

\noindent
\textbf{Outline of contributions: Fair Max-Min diversification.}
To the best of our knowledge, this paper is the first to introduce fairness
constraints to Max-Min diversification. We initially focus on the case of disjoint groups, but extend our algorithms to tackle the overlapping case as well. Our work makes the following contributions.
\begin{itemize}[wide, labelwidth=!,labelindent=0pt,leftmargin=\parindent, itemsep=1pt, topsep=1pt]
	\item After some background and preliminaries (Section~\ref{sec:2.1}), we 	introduce and formally define the problem of \emph{fair} Max-Min 	diversification focusing on non-overlapping groups, and further discuss its 	complexity and approximability. To the best of our
	knowledge, no prior work has studied the Max-Min diversification objective
	under fairness constraints. We also describe how our
	algorithmic frameworks support any constraints that can be expressed in
	terms of partition matroids (Section~\ref{sec:2.2}).

	\item We propose a swap-based greedy approximation algorithm, with linear
	runtime, for the case of $m=2$, which offers a constant
	$\tfrac{1}{4}$-factor approximation guarantee (Section~\ref{sec:swapAlg}).

	\item We propose a general max-flow-based polynomial algorithm, with
	runtime linear in the size of the data, that offers a
	$\tfrac{1}{3m-1}$-factor approximation (Section~\ref{sec:flowAlg}). We also
	demonstrate that for constant $m$ and \emph{small} values for $k =o(\log
	n)$, we can achieve a constant $\tfrac{1}{5}$-approximation, also in linear
	time. While this bound is obviously stronger than our bound for the general
	case, the $\tfrac{1}{5}$-approximation algorithm becomes impractical as $k$
	increases (Section~\ref{sec:special}).
	 
	\item We generalize the \emph{fair} diversification problem to the case of
	overlapping groups (an element can belong to multiple demographic groups).
	We propose polynomial-time algorithms with $\tfrac{1}{4}$-factor 	approximation for the case of $m=2$ and $\tfrac{1}{3 {m \choose \lfloor
	m/2 \rfloor}-1}$-factor approximation for any $m$ 	(Section~\ref{sec:overlap}). 
\end{itemize}

	\section{Fair Max-Min Diversification: Background and Definition} \label{sec:Background} 
In this section, we review necessary background and preliminaries on the Max-Min diversification objective and relevant approximations. Then, we formally define the \emph{fair} Max-Min diversification problem, which generalizes Max-Min diversification. We further characterize the hardness of the problem and the hardness of its approximation, and describe its connection to partition matroids.

\subsection{Max-Min Diversification}\label{sec:2.1} 
\noindent
\textbf{Problem definition.} Prior work has identified a range of diversity
objectives to perform diverse data selection. In this work we primarily focus
on the Max-Min objective, which corresponds to the minimum distance of any two
items in a set $\mathcal{S}$. More formally, we assume a universe of elements $\cU$ and a pseudometric distance function $d: \mathcal{U} \times
\mathcal{U}\rightarrow \mathbb{R}_{0}^{+}$. For every $u, v \in \mathcal{U}$, $d$ satisfies the following properties: $d(u,u) = 0$, 
$d(u, v)=d(v,u)$ (symmetry),
and $d(u, v) \leq d(u, w) + d(w,v)$ (triangle inequality). Then, $d(u, v)$ captures the dissimilarity of the elements $u,v\in
\mathcal{U}$, and the Max-Min diversity score of a set $\mathcal{S}$ is
\[ \diver(\mathcal{S}) = \min\limits_{\substack{u,v \in \mathcal{S}\\ u\neq v}}d(u,v) \] Max-Min diversification seeks to identify a set $\mathcal{S}\subseteq\mathcal{U}$ and $|\mathcal{S}|=k$, such that the minimum
pairwise distance, $\diver(\mathcal{S})$, of elements in $\mathcal{S}$ is
maximized. 

\smallskip

\noindent
\textbf{Algorithms and approximations.} 
This problem formulation was initially studied in the operation research literature by Ravi et al.~\cite{1994:HSC:2753204.2753214} and in the context of facility location on graphs by Tamir~\cite{Tamir}. They both show that the problem is NP-complete even in metric spaces and give a greedy algorithm, GMM, that guarantees a $\tfrac{1}{2}$-approximation for Max-Min diversification. Ravi et al.~\cite{1994:HSC:2753204.2753214} also show that this problem cannot
be approximated within a factor better than $\tfrac{1}{2}$ unless P=NP through
a reduction from the clique problem. 

The GMM approximation algorithm uses the simple and intuitive farthest first traversal heuristic: Given a set of items $\mathcal{S}$, add the element from $\mathcal{U}$ whose minimum distance from any element in $\mathcal{S}$ is the largest. Algorithm~\ref{algo:GMMA} shows the pseudocode for \gonzalez, which starts with an initial set of elements $I$ and greedily augments it with $k$ elements from $\mathcal{U}$. Note that the
\gonzalez algorithm, as presented by Ravi et al.~\cite{1994:HSC:2753204.2753214} and Tamir~\cite{Tamir} assumes that
$I=\emptyset$; in this paper, we use the slight variant presented
in Algorithm~\ref{algo:GMMA}, which assumes that $I$ can be
non-empty. We use \gonzalez as a building block for the algorithms we present
in this paper. A naive implementation of the algorithm requires $O((|I|+k)^2 n )$ time but more efficient implementation requires $O((|I|+k) n )$ time; see, e.g.,~\cite{pmlr-v97-kleindessner19a, 10.14778/3192965.3192969} for details. 

\subsection{Fair Max-Min Diversification}\label{sec:2.2} 
\noindent
\textbf{Problem definition and analysis.}
We assume a universe of elements $\mathcal{U}$ of size $n$, comprising of $m$
non-overlapping classes: $\mathcal{U}=\bigcup_{i=1}^m\mathcal{U}_i$; we further
assume a pseudometric distance function $d: \mathcal{U} \times
\mathcal{U}\rightarrow \mathbb{R}_{0}^{+}$; finally, we assume non-negative
integers $\langle k_1,\dots,k_m\rangle$, which we call \emph{fairness
constraints}. Our goal is to identify a set $\mathcal{S}\subseteq\mathcal{U}$,
such that for all $i$, $|\mathcal{S}\cap\mathcal{U}_i|=k_i$, and the minimum
distance of any two items in $\mathcal{S}$ is maximized.  More formally:
\vspace{-2mm}
\begin{flalign*}
	\fMM: \ & \maximize_{\mathcal{S} \subseteq \mathcal{U}} \quad \min_{\substack{u,v\in 	\mathcal{S}\\ u\neq v}}d(u,v) & \\
	& \textup{subject to} \ |\mathcal{S} \cap \mathcal{U}_i| = k_i, \ \forall i \in [m]
\end{flalign*}

\looseness -1
Intuitively, \fMM aims to derive the set with the maximum diversity score
$\diver(\mathcal{S})$, while satisfying the fairness constraints. Next, we
state formally the hardness of \fMM and bound its approximability. These
results follow easily from the corresponding prior results on unconstrained
Max-Min diversification, as that problem reduces to \fMM for $m=1$.

\begin{corollary}[\textsc{Hardness}]\label{col:hard}
	Determining if there exists a solution to \fMM
	with diversity score $\geq\delta$ is NP-complete.
\end{corollary}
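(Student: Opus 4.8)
The plan is to establish both membership in NP and NP-hardness. For membership, I would observe that the natural certificate is the set $\mathcal{S}$ itself: given a candidate $\mathcal{S} \subseteq \mathcal{U}$, one can verify in polynomial time that $|\mathcal{S} \cap \mathcal{U}_i| = k_i$ for every $i \in [m]$ and that $\diver(\mathcal{S}) \geq \delta$. The cardinality checks require a single pass over $\mathcal{S}$ together with the group labels, and the diversity check amounts to evaluating $d$ on the $\binom{k}{2}$ pairs of elements in $\mathcal{S}$ and confirming that each distance is at least $\delta$; both are clearly polynomial in the input size.

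For hardness, I would give a reduction from the decision version of \emph{unconstrained} Max-Min diversification, which Ravi et al.~\cite{1994:HSC:2753204.2753214} and Tamir~\cite{Tamir} show to be NP-complete even in metric spaces. As noted in the text, unconstrained Max-Min is exactly the $m=1$ specialization of \fMM, so the reduction is essentially the identity map: given an instance of unconstrained Max-Min consisting of a universe $\mathcal{U}$, metric $d$, target size $k$, and threshold $\delta$, which asks whether some $k$-subset has diversity score at least $\delta$, I would construct the \fMM instance with a single group $\mathcal{U}_1 = \mathcal{U}$, the same distance function $d$, the fairness constraint $k_1 = k$, and the same threshold $\delta$. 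This transformation is computable in linear time.

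Finally I would argue correctness of the reduction in both directions. A feasible solution to the constructed \fMM instance is precisely a subset $\mathcal{S}$ with $|\mathcal{S}| = |\mathcal{S} \cap \mathcal{U}_1| = k_1 = k$, that is, an arbitrary $k$-subset of $\mathcal{U}$, and its objective value under \fMM coincides with its unconstrained Max-Min diversity score. Hence a $k$-subset of $\mathcal{U}$ with diversity at least $\delta$ exists if and only if the \fMM instance admits a feasible $\mathcal{S}$ with $\diver(\mathcal{S}) \geq \delta$, which establishes the equivalence and completes the hardness argument.

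There is no genuine obstacle here, and the word \emph{easily} in the surrounding text is apt: since the known NP-complete problem embeds as the special case $m=1$ under a trivial reduction, the only points that warrant care are confirming that the decision thresholds align exactly and that the certificate can be checked in polynomial time, both of which are routine. The same embedding also transfers the inapproximability-beyond-$\tfrac{1}{2}$ barrier of Ravi et al.~\cite{1994:HSC:2753204.2753214} to \fMM, since an approximation algorithm for \fMM with a better factor would, restricted to $m=1$, violate their lower bound unless P$=$NP.
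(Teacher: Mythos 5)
Your proposal is correct and follows essentially the same route as the paper: membership in NP via direct certificate verification, and NP-hardness by observing that unconstrained Max-Min diversification is exactly the $m=1$ special case, so the identity reduction applies. You simply spell out the two directions of the reduction and the threshold alignment in more detail than the paper does.
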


\begin{proof}	
	The problem is clearly in NP: If we are given a solution $\mathcal{S}$, we 	can verify that it satisfies the fairness constraints and compute its 	diversity score in polynomial time. The unconstrained version of Max-Min 	diversification is NP-complete~\cite{1994:HSC:2753204.2753214, Tamir}, and 	it is a special case of our problem for $m=1$. Since any instance of Max-Min 	diversification can be reduced to an instance of \fMM with $m=1$, then \fMM 	is also NP-complete.
\end{proof}

\begin{corollary}[\textsc{Approximability Bound}] \label{col:bound}
	There exists no polynomial-time $\alpha$-approximation algorithm for \fMM 	with $\alpha>\tfrac{1}{2}$, unless P=NP. 
\end{corollary}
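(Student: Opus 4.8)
The plan is to reuse the reduction structure from Corollary~\ref{col:hard}: since unconstrained Max-Min diversification is exactly the $m=1$ special case of \fMM, any approximability lower bound for the former transfers verbatim to the latter. First I would invoke the known inapproximability result of Ravi et al.~\cite{1994:HSC:2753204.2753214}, who showed via a reduction from the clique problem that no polynomial-time algorithm approximates unconstrained Max-Min diversification within a factor better than $\tfrac{1}{2}$ unless P=NP.

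Next I would set up the approximation-preserving reduction. Given an instance of unconstrained Max-Min diversification on a universe $\mathcal{U}$ with target size $k$, I construct the \fMM instance with a single group $\mathcal{U}_1 = \mathcal{U}$ (so $m=1$) and fairness constraint $k_1 = k$. The constraint $|\mathcal{S} \cap \mathcal{U}_1| = k_1$ then reduces to $|\mathcal{S}| = k$, so the feasible sets and the objective $\diver(\mathcal{S})$ coincide exactly with those of the unconstrained problem. In particular, the optimal values agree, and any $\alpha$-approximate feasible solution for the \fMM instance is an $\alpha$-approximate solution for the original instance.

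Finally I would argue by contradiction: suppose a polynomial-time $\alpha$-approximation algorithm for \fMM existed with $\alpha > \tfrac{1}{2}$. Running it on the $m=1$ instances above would yield a polynomial-time $\alpha$-approximation for unconstrained Max-Min diversification with $\alpha > \tfrac{1}{2}$, contradicting the Ravi et al.\ bound unless P=NP. Since no genuinely new construction is required, I do not expect any real obstacle; the only point demanding care is verifying that the $m=1$ reduction is exactly approximation-preserving (not merely gap-preserving), which holds because the two objectives and feasible regions are literally identical in that case.
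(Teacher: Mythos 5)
Your proposal is correct and follows essentially the same argument as the paper: both reduce unconstrained Max-Min diversification to the $m=1$ instance of \fMM and invoke the $\tfrac{1}{2}$ inapproximability bound of Ravi et al.\ to derive a contradiction. You merely spell out the approximation-preserving reduction in more detail than the paper does, which is a harmless (and arguably welcome) elaboration.
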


\begin{proof}
	Suppose that there exists a polynomial algorithm that approximates the 	diversity score of the optimal solution to \fMM by a factor of 	$\alpha>\frac{1}{2}$. Then, this algorithm could also solve the unconstrained 	Max-Min diversification problem with approximation factor $\alpha$. However, 	Ravi et al.~\cite{1994:HSC:2753204.2753214} have shown that unconstrained 	Max-Min diversification cannot be approximated within a factor better than 	$\tfrac{1}{2}$, through a reduction from the clique problem. Therefore, it is 	not possible for such an algorithm to exist.
\end{proof}	

\subsubsection{Fairness as a Partition Matroid}\label{sec:fairnessAsPM} 
\looseness -1
While the focus of our work is on fairness constraints in particular, our results apply in general to any type of constraints that can be expressed in terms of a partition matroid. We provide a brief overview of the matroid definition and show that fairness constraints can be expressed as a partition matroid.

\begin{definition}
A matroid $\mathcal{M}$ is a pair $(\mathcal{E},
\mathcal{I})$ where $\mathcal{E}$ is a ground set of elements and $\mathcal{I}$ is a collection of subsets of $E$ (called \textit{independent sets}). All the \textit{independent sets} in $\mathcal{I}$ satisfy the following properties:

\begin{itemize} 
	\item If $\mathcal{A} \in \mathcal{I}$, then for every subset $\mathcal{B} 	\subseteq \mathcal{A}$,  $\mathcal{B}\in
	\mathcal{I}$. (Hereditary property) 
	\item If $\mathcal{A}, \mathcal{B} \in
	\mathcal{I}$ with $|\mathcal{A}| > |\mathcal{B}|$, then $\exists e \in 	\mathcal{A} \setminus \mathcal{B}$
	such that $\mathcal{B} \cup \{e\} \in \mathcal{I}$. (Exchange
	property) 
\end{itemize}

\end{definition}

A maximal independent set in $\mathcal{I}$ (also called a basis for a matroid)
is a set for which there is no element outside of the set that can be added so
that the set remains independent. All maximal independent sets of a
matroid have equal cardinality which is also called the rank of the matroid, rank($\mathcal{M})$. 

\begin{definition} 
    A matroid $\mathcal{M}=(\mathcal{E}, \mathcal{I})$ is a partition matroid
    if $\mathcal{E}$ can be decomposed into $m$ \textit{disjoint} sets
    $\mathcal{E}_1,\mathcal{E}_2, ..., \mathcal{E}_m$ and $\mathcal{I}$ is
    defined as $ \mathcal{I}= \{ S\subseteq \mathcal{E}: |S \cap \mathcal{E}_i| \leq k_i \
    \forall \ i \in [1, m] \}$.
\end{definition}

Note that a maximal independent set (or a basis) for a partition matroid is an
independent set that satisfies all the cardinality constraints with equality. 
For further information on matroids, we refer the interested reader to~\cite{schrijver2003combinatorial}. Based on the definitions above, in \fMM
the ground set is the universe of elements $\mathcal{U} = \bigcup_{i=1}^{m}
\mathcal{U}_i$. Then \fMM can be expressed as searching for the maximal independent set of the partition matroid defined over $\mathcal{U}$ that maximizes the Max-Min diversity function. 

\medskip

\noindent
\textbf{Our contributions to fair Max-Min diversification.} 
To the best our knowledge, this is the first paper to augment the Max-Min diversification problem with fairness constraints. For this problem, typically $m$ is a small constant and $k\ll n$. Therefore, when considering algorithmic complexity, we want to avoid high-order dependence on the size of the data, $n$.
In Section~\ref{sec:divAlgs}, we provide linear-time algorithms, with respect to $n$, with strong approximation guarantees for this problem in the case of non-overlapping groups. In Section~\ref{sec:overlap}, we extend our results to design polynomial-time algorithms with strong approximation guarantees for the
generalized setting of overlapping groups.
	\section{Approximating Diversity}\label{sec:divAlgs}
In Section~\ref{sec:2.2}, we showed that the \emph{fair} formulation for the
Max-Min diversification problem is NP-hard, and cannot be approximated within a
factor better than $\tfrac{1}{2}$. In this section, we propose three
approximation algorithms for this problem, with a best overall bound of
$\tfrac{1}{4}$ for the case of $m=2$. For ease of exposition, in the rest of
the paper we frequently refer to each of the $m$ groups as different colors.

Our algorithms use \gonzalez (Algorithm~\ref{algo:GMMA}) as a building block,
but adapting \gonzalez for fair Max-Min diversification is not straightforward.
We give an example of a simple and intuitive algorithm based on \gonzalez that
can lead to an arbitrarily bad result, even in the case of $m=2$ colors. In the
first phase of the algorithm, we use \gonzalez to greedily select elements of
any color until the constraints for one of them are satisfied. In the second
phase, we allow \gonzalez to greedily select the remaining elements only from
the under-satisfied color. Suppose that our data consist of one white and three
black elements positioned in a line as follows:

\begin{minipage}[h]{\textwidth}
	\vspace{1mm}
	\centering
	\includegraphics[width=0.4\textwidth]{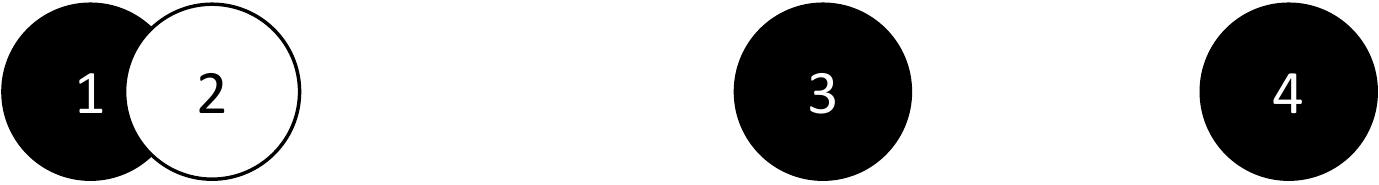}
	\vspace{1mm}
\end{minipage}

Further, consider that the fairness constraints require the selection of one
white and two black elements, and that \gonzalez first selects a black element.
Regardless of which black element is selected first, the simple algorithm we
described will always be forced to select elements 1 and 2---the possible
selection scenarios are: $\{1, 4, 2\}$, $\{3, 1, 2\}$, and $\{4, 1,
2\}$---which can be arbitrarily close to one another. This example demonstrates
how the choices made for one color, can lead to arbitrarily bad choices for the
other color(s), and the problem gets harder as $m$ increases.

Our algorithms employ \gonzalez in ways that guarantee the preservation of
\emph{good} choices for all colors. We start with a swap-based algorithm that
offers a $\tfrac{1}{4}$ approximation when $m=2$. Then we present a flow-based
algorithm with a $\tfrac{1}{3m-1}$ approximation when $m\geq 3$. Both
algorithms run in $O(kn)$ time. Finally, we present a
$\tfrac{1}{5}$-approximation for $m\geq 3$ that also runs in $O(kn)$, on the
assumption $m$ is constant and $k=o(\log n)$. However, the running time of this
third algorithm has an additional factor that depends exponentially on $k$,
which makes the algorithm practical only for \emph{small} $k$ values, e.g., for
$n=10^4$, $k \approx 10$.

\newcommand{\GMM}{
	\begin{algorithm}[H]
		\caption{\gonzalez Algorithm}\label{algo:GMMA}
		{\small
			\begin{algorithmic}[1] 
				\Statex
				\begin{description}
					\item[\rlap{Input:}\phantom{Output:}] $\mathcal{U}$: Universe of available elements 
					\item[\phantom{Output:}] $k \in \mathbb{Z}^{+}$ 
					\item[\phantom{Output:}] $I$: An initial set of elements
					\item[Output:] $\mathcal{S} \subseteq \mathcal{U}$ of size $k$ 
				\end{description}	
			
				\Procedure{GMM$(\mathcal{U},I, k)$}{}
				\State $\mathcal{S} \gets \emptyset$. 
			
				\If {$I = \emptyset$}
					\State $S \gets$ a randomly chosen point in $\mathcal{U}$ 
				\EndIf
				\While{$|\mathcal{S}|< k$} 
	    			\State $x \gets \underset{u \in \mathcal{U}}{\text{argmax}} \ \ 			   		\underset{s \in \mathcal{S}\cup I}{\text{min}} \ d(u, s)$ 
	    			\State $\mathcal{S} \gets \mathcal{S} \cup \{x\}$ 
				\EndWhile 
				\Statex
				\Return $\mathcal{S}$ 
				\EndProcedure 
			\end{algorithmic}
		}
	\end{algorithm}
}

\newcommand{\fairswap}{
	\begin{algorithm}[H]
		\caption{\fMMSwap: Fair Diversification for $m=2$} \label{algo:binary}
		{\small
		\begin{algorithmic}[1]
			\Statex 
			\begin{description}
				\item[\rlap{Input:}\phantom{Output:}]
				$\cU_1, \cU_2$: Set of points of color $1$ and $2$
				\item[\phantom{Output:}] $k_1, k_2  \in \mathbb{Z}^{+}$
				\item[Output:] $k_i$ points in $\cU_i$ for $i\in \{1,2\}$
			\end{description}

			\Procedure{\fMMSwap}{}
				\LineCommentx{Color-Blind Phase:}
				\State $\cS \leftarrow \gonzalez(\cU,\emptyset,k)$\label{ln:sGMM}
				\State $\cS_i=\cS \cap \cU_i$ for $i\in \{1,2\}$					
				\LineCommentx{Balancing Phase:}
				\State Set $U=\argmin_i (k_i-|\cS_i|)$ \Comment{{\footnotesize Under-satisfied set}}
				\State $O=3-U$ \Comment{{\footnotesize Over-satisfied set}}
				\State Compute: \label{ln:rebalance}
				\begin{eqnarray*}
					E & \leftarrow & \gonzalez(\cU_U,\cS_U,k_U-|\cS_U|)\\
		 			R &\leftarrow & \{\argmin_{x\in \cS_O} d(x,e): e\in E\}
		 		\end{eqnarray*}
				\Return $(\cS_U\cup E) \cup (\cS_O\setminus R)$
			\EndProcedure
		\end{algorithmic}
		}
	\end{algorithm}
}

\newcommand{\fairGMM}{
	\begin{algorithm}[H] 
		\caption{\fGMM: Fair Diversification for small $k$}\label{algo:smallk} 
		{\small
		\begin{algorithmic}[1]
			\Statex
			\begin{description} 
				\item[\rlap{Input:}\phantom{Output:}]
				$\cU_1, \ldots, \cU_m$: Universe of available elements
				\item[\phantom{Output:}] $k_1, \ldots, k_m \in \mathbb{Z}^{+}$
				\item[Output:] $k_i$ points in $\cU_i$ for $i\in [m]$ 			\end{description} 
			\Procedure{\fGMM}{}
				\For{$i\in [m]$}
					$Y_i\leftarrow \gonzalez(\cU_i,\emptyset, k)$
				\EndFor
				\State By exhaustive search, find the  sets $\cS_i \subseteq
				Y_i$ for $i\in [m]$ such that $|\cS_i|=k_i$ and
				$\diver(\cS_1 \cup \ldots \cup \cS_m)$ is maximized. 
			\EndProcedure
		\end{algorithmic}
		}
	\end{algorithm}
}

\newcommand{\fairFlow}{
	\begin{algorithm}[H] 
		\caption{\fMMFlow: Fair Diversification for $m\geq 3$}\label{algo:m3} 
		{\footnotesize
		\begin{algorithmic}[1]
			\Statex
			\begin{description} 
				\item[\rlap{Input:}\phantom{Output:}]
	$\cU_1, \ldots, \cU_m$: Universe of available elements 					\item[\phantom{Output:}] $k_1, \ldots, k_m \in \mathbb{Z}^{+}$
				\item[\phantom{Output:}] $\gamma\in \mathbb{R}$: A guess of the optimum fair diversity
				\item[Output:] $k_i$ points in $\cU_i$ for $i\in [m]$ 
				\end{description} 
				\Procedure{\fMMFlow}{}
				\For{$i\in [m]$}
					\State $Y_i\leftarrow \gonzalez(\cU_i,\emptyset, k)$
				\EndFor
				
				\State $Z_i \leftarrow$ maximal prefix of $Y_i$ such that all
				points 
				\Statex \hspace{\algorithmicindent}\phantom{$Z_i \leftarrow$}
				in $Z_i$ are $\geq d_1=\frac{m\gamma}{3m-1}$ apart.
				\State Construct undirected graph $G_Z$ with nodes   
				\Statex \hspace{\algorithmicindent}$Z=\bigcup_i
				Z_i$ and edges $(z_1,z_2)$, if $d(z_1,z_2)< 
				d_2=\frac{\gamma}{3m-1}$.
				\State $C_1, C_2, \ldots C_t\leftarrow$ Connected components of
				$G_Z$.
				\LineCommentx{Construct flow graph}
				\State Construct directed graph $G=(V,E)$ where\label{ln:reduction}
				\begin{eqnarray*}
					V&=& \{a,u_1, \ldots, u_m, v_1, \ldots, v_t, b\}  \\
					E&=& \{(a,u_i) \mbox{ with capacity $k_i$}: i \in [m]\} \\
					& & \cup~ \{(v_j,b) \mbox{ with capacity $1$}: j \in [t]\} \\
					& & \cup~  \{(u_i,v_j) \mbox{ with capacity $1$}:  |Z_i\cap
					C_j|\geq 1\} 
				\end{eqnarray*} 
				\State Compute max $a$-$b$ flow. 
				\If{flow size $<k=\sum_i k_i$}
					\Return $\emptyset$ \Comment{Abort}
				\Else \Comment{max flow is $k$}
					\State $\forall(u_i,v_j)$ with flow add a node in $C_j$ with color $i$ to $\mathcal{S}$.
				\EndIf
				\Statex	
				\Return $\mathcal{S}$
			\EndProcedure
		\end{algorithmic}}
	\end{algorithm}
}

\begin{figure}[t]
	
	\scalebox{0.82}{
	\begin{minipage}[t]{0.6\textwidth}
		\GMM
		\fairswap
	\end{minipage}
	}	
	~\scalebox{0.82}{	
	\begin{minipage}[t]{0.62\textwidth}
		\fairFlow \vspace{-4mm}
		\fairGMM
	\end{minipage}
	}
\end{figure}

\subsection[Fair-and-Diverse Selection: m=2]{Fair-and-Diverse Selection: {\large\boldmath{$m=2$}}}\label{sec:swapAlg} 
In the binary setting, the input is a set of points $\mathcal{U} = \mathcal{U}_{1} \cup \mathcal{U}_{2}$ and two non-negative integers $\langle k_1, k_2\rangle $ with $k_i \leq |\mathcal{U}_i|$ for all $i \in \{1,2\}$. We want to select a set $\mathcal{S}$ with $k_i$ elements from each $\mathcal{U}_i$ partition such that the $\diver(\mathcal{S})$ is maximized. 

\smallskip

\noindent
\textbf{Algorithm and intuition.} \fMMSwap (Algorithm~\ref{algo:binary}) has two phases; the \emph{color-blind} and the \emph{balancing} phase. In the \emph{color-blind} phase, we call \gonzalez by initializing $I$ to
the empty set so as to retrieve a set $\mathcal{S} = \mathcal{S}_1 \cup
\mathcal{S}_2$ of size $k$ (line~\ref{ln:sGMM}). If $|\mathcal{S}_1|=k_1$ and $|\mathcal{S}_2|=k_2$ then these two sets are returned. Alternatively, if one set is smaller than required, then the other set is larger than required, and we need to rebalance these sets.  

Let $\mathcal{S}_U$ be the set that is too small and let $\mathcal{S}_O$ be the set that is too large. The algorithm next  finds $k_U-|\mathcal{S}_U|$  extra points $E\subseteq \cU_U$ to add to $\mathcal{S}_U$ by again using the \gonzalez algorithm, this time initialized with the set $\mathcal{S}_U$. For each point in $E$ we then remove the closest point in  $\mathcal{S}_O$ (line~\ref{ln:rebalance}). In this way we add $k_U-|\mathcal{S}_U|$ points to $\cS_U$ and remove $k_U-|\mathcal{S}_U|$ points from $\cS_O$. After this rebalancing the size of $\cS_U$ is $$|\mathcal{S}_U|+(k_U-|\mathcal{S}_U|)=k_U$$ 
and the size of $\cS_O$ is $$|\mathcal{S}_O|-(k_U-|\mathcal{S}_U|)=k-k_U=k_O$$ as required. Note that sets $E$ and $R$ will be empty if the sets are already balanced after the color blind phase and thus the set $\mathcal{S}$ will not be altered by the balancing phase.

\smallskip

\noindent
\textbf{Running-time analysis.} The running time of \fMMSwap (Algorithm~\ref{algo:binary}) is $O(kn)$. In the color-blind phase of the algorithm we run \gonzalez on $\mathcal{U}$ with $I=\emptyset$ and this takes $O(kn)$ time. Then in the balancing phase, computing the extra points $E$ via the \gonzalez algorithm takes $O(kn)$ time and computing $R$ takes $O(k^2)$ time since there are fewer than $k$ points in $E$ and at most $k$ points in $\cS_O$.

\smallskip

\noindent
\textbf{Approximation-factor analysis.} Let $\cS^*$ be the set of $k$ points in $\cU$ that maximize the diversity when there are no fairness constraints. Let $\ell^*=\diver(\cS^*)$. Let $\cF^*=\cF^*_{1}\cup \cF^*_{2}$ be the set of $k$ points in $\cU$ that maximize the diversity subject to the constraint that for each $i\in \{1,2\}$, $k_i$ points are chosen of color $i$. Let $\ell^*_{\text{fair}}=\diver(\cF^*)$ and note  that $\ell^*\geq \ell^*_{\text{fair}}$. We first argue that 

\begin{equation} \label{eq:eq1}
\diver(\cS)\geq \ell^*/2\geq \ell^*_{\text{fair}}/2. 
\end{equation}

This follows because, by the triangle inequality, there is at most one point in $\cS^*$ that is distance $<\ell^*/2$ from each point in $\cS$; otherwise two points in $\cS^*$ would be $<\ell^*$ apart and this contradicts the fact $\diver(\cS^*)=\ell^*$. Hence, while the \gonzalez  algorithm has picked $<k$ elements, there exists at least one element in $\cS^*$ that can be selected that is distance  $\geq \ell^*/2$ from all the points already selected. Since the algorithm picks the next point farthest away from the points already chosen, the next point is at least $\ell^*/2$ from the existing points. Next we argue that

\begin{equation} \label{eq:eq2}
 \diver(\cS_U\cup E)\geq \ell^*_{\text{fair}}/2. 
\end{equation} 

To show this, first observe that, $\diver(\cS_U) \geq \diver(\cS) \geq \ell^*_{\text{fair}}/2$. Next consider the points added to $E$ by \gonzalez. By the triangle inequality there is at most one point in $\cF^*_U$ that is distance $<\ell^*_{\text{fair}}/2$ from each point in $\cS_U\cup E$. Hence, while \gonzalez has picked $<k-|\cS_U|$ elements, there exists at least one element that can be selected that is distance  $\geq \ell^*_{\text{fair}}/2$ from the points already selected. Since the algorithm picks the next point farthest away from the points already chosen, the next point is at least $\ell^*_{\text{fair}}/2$ from the existing points. 

From Eq.~\ref{eq:eq1} and Eq.~\ref{eq:eq2}, we can guarantee that $d(x,y)\geq \ell^*_{\text{fair}}/2$ for all  pairs of points $x,y\in \cS_U\cup E \cup \cS_O$ except potentially when $x \in E$ and $y\in \cS_O$. To handle this case, for each $x\in E$ we remove the closest point in $\cS_O$. Note that by an application of the triangle inequality and the fact that $\diver(\cS_O)\geq \lsf/2$,  for each $x\in E$ there can be at most one point $y\in \cS_O$ such that $d(x,y)< \lsf/4$. Hence, after the removal of the closest points the distance between all pairs is $\geq \lsf/4$ as required. We summarize the analysis of this section as follows:

\begin{theorem}
\fMMSwap (Algorithm~\ref{algo:binary}) is a  $1/4$-approximation algorithm for the fair diversification problem when $m=2$ that runs in time $O(kn)$. 
\end{theorem}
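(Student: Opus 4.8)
The plan is to verify the two claims in the statement separately: that the returned set $\cS_U \cup E \cup (\cS_O \setminus R)$ has diversity at least $\lsf/4$, and that the whole procedure runs in $O(kn)$ time. I would fix two reference objects: let $\cS^*$ be an optimal \emph{unconstrained} Max-Min set of size $k$ with $\ell^* = \diver(\cS^*)$, and let $\cF^* = \cF^*_1 \cup \cF^*_2$ be an optimal \emph{fair} set with $\lsf = \diver(\cF^*)$. The single inequality that drives the whole argument is $\ell^* \geq \lsf$, which holds because any fair set is also a feasible size-$k$ set for the unconstrained problem.

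First I would analyze the two \gonzalez invocations with the standard farthest-first packing argument. For the color-blind call I would argue that, by the triangle inequality, at most one point of $\cS^*$ can lie within $\ell^*/2$ of any already-chosen point; hence while fewer than $k$ points have been selected, some point of $\cS^*$ sits at distance $\geq \ell^*/2$ from all of them, and since \gonzalez always takes the farthest available point, $\diver(\cS) \geq \ell^*/2 \geq \lsf/2$. The same template, applied to the balancing call but using $\cF^*_U$ (the color-$U$ part of the fair optimum, whose $k_U$ points are pairwise $\geq \lsf$ apart) as the reference packing, gives $\diver(\cS_U \cup E) \geq \lsf/2$. Here I would first note $\diver(\cS_U) \geq \diver(\cS) \geq \lsf/2$ since $\cS_U \subseteq \cS$, and observe that the $k_U$ candidates in $\cF^*_U$ comfortably exceed the $k_U - |\cS_U|$ points \gonzalez still needs to add.

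The main obstacle, and the only place the factor degrades from $1/2$ to $1/4$, is controlling the cross-color pairs. After the two calls, every pair inside $\cS_U \cup E$ and every pair inside $\cS_O \setminus R$ is $\geq \lsf/2$ apart, so the only pairs in doubt are $(x,y)$ with $x \in E$ and $y \in \cS_O$. I would handle these with the swap: since $\diver(\cS_O) \geq \lsf/2$, the triangle inequality forces at most one point of $\cS_O$ to lie within $\lsf/4$ of any fixed $x \in E$ (two such points would be $< \lsf/2$ apart), and that point, if it exists, is necessarily the \emph{closest} $\cS_O$-point to $x$. Thus deleting the nearest neighbor of each $x \in E$ removes every cross-pair shorter than $\lsf/4$ and leaves $\diver\bigl(\cS_U \cup E \cup (\cS_O \setminus R)\bigr) \geq \lsf/4$. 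The subtle bookkeeping I would not skip is cardinality: I must confirm $|R| = |E| = k_U - |\cS_U|$, so the survivor has exactly $k_O$ points of color $O$. The genuinely bad neighbors are distinct across $E$ (if $x_1,x_2$ shared one within $\lsf/4$, then $d(x_1,x_2) < \lsf/2$, contradicting $\diver(\cS_U \cup E) \geq \lsf/2$), and any residual collision among non-bad nearest neighbors can be broken by removing an arbitrary further point at distance $\geq \lsf/4$, which does not affect the bound.

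Finally I would account for the running time: both \gonzalez calls cost $O(kn)$ under the efficient implementation noted earlier, and forming $R$ requires a nearest-neighbor scan over the at most $k$ points of $\cS_O$ for each of the fewer than $k$ points of $E$, i.e.\ $O(k^2) \subseteq O(kn)$. Combining the diversity bound $\diver \geq \lsf/4$ with this time bound yields the claimed $1/4$-approximation in $O(kn)$ time.
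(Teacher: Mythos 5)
Your proposal is correct and follows essentially the same route as the paper: the packing argument against $\cS^*$ for the color-blind phase, the packing argument against $\cF^*_U$ for the balancing phase, and the triangle-inequality swap argument showing each $x\in E$ has at most one $\cS_O$-neighbor within $\lsf/4$. Your extra care on the cardinality of $R$ (distinctness of the genuinely bad neighbors, plus removing arbitrary additional points if nearest neighbors collide) addresses a detail the paper's proof glosses over, but it is a refinement of the same argument rather than a different approach.
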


\noindent
\textbf{Connections to prior art.} 
Balancing mechanisms have also been successfully applied to matroid
optimization settings subject to fairness
constraints~\cite{pmlr-v89-chierichetti19a}, and to the red-blue matching
problem~\cite{nomikos}. However, our objective function cannot be expressed by
a matroid (or an intersection of matroids), and thus the results of prior work
are not applicable to our setting. Further, the algorithms and analysis are
also distinct for these problems.
	\subsection[Fair-and-Diverse Selection: m>=3]{Fair-and-Diverse Selection: {\large\boldmath{$m\geq 3$}}}\label{sec:flowAlg}

\smallskip

\noindent
\textbf{Basic algorithm.} 
We start by presenting a basic algorithm that takes as input a guess $\gamma$ for the optimum fair diversity. If this guess is greater than the optimum fair diversity then the algorithm may abort, but if the algorithm does not abort, it will return a fair diversity at least $\gamma/(3m-1)$.

\smallskip

\noindent
\textbf{Algorithm and intuition.} \looseness-1 
The approach of \fMMFlow (Algorithm~\ref{algo:m3}) is to construct disjoint sets of points $C_1, C_2, \ldots$ such that, if $\gamma$ is at most the optimal fair diversity, it is possible to find sets $\cS_1, \ldots, \cS_m$ of sizes $k_1, \ldots, k_m$ such that each $C_i$ contains at most one point from $\cS_1 \cup \ldots \cup \cS_m$. If we can construct $C_1, C_2, \ldots$ such that for any $x\in C_i$ and $y\in C_j$, then $d(x,y)\geq d_2$ for some value $d_2$ then we have $$\diver(\cS_1\cup  \ldots \cup \cS_m) \geq d_2$$ 

Furthermore, because the sets $C_1, C_2, \ldots$ are disjoint it is possible to find sets $\cS_1, \ldots, \cS_m$ with the required property via a reduction to network flow (noting that the optimal flow in a network with integer capacities is always integral). See the algorithm for the precise reduction and see Figure~\ref{fig:network} for an example.

\begin{figure}[t]
	\centering
	\includegraphics[width=0.4\textwidth]{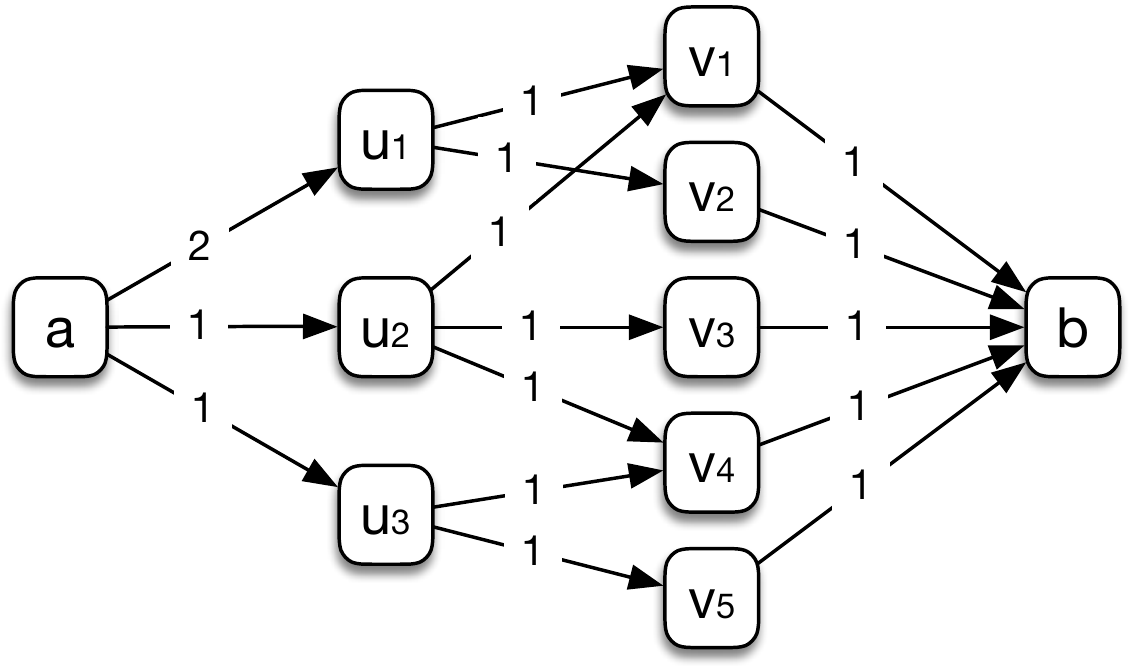}
	\vspace{-3mm}
	\caption{The graph construction in Algorithm~\ref{algo:m3} 	(line~\ref{ln:reduction}) corresponding to	$m=3$, $k_1=2, k_2=1, k_3=1$. 	Points of color 1 are contained in $C_1$ and $C_2$. Points of color 2 are 	contained in $C_1,C_3,$ and $C_4$. Points of color 3 are contained in $C_4$ 	and $C_5$. Note there is an $a$-$b$ flow of size $k_1+k_2+k_3$ iff it is 	possible to pick at most one point from each $C_j$ while still picking at 	most $k_i$ points of color $i$ for each $i\in [m]$.
	}
	\label{fig:network}
	\vspace{-3mm}
\end{figure}

The way we construct each $C_1, C_2, \ldots$ is to first run \gonzalez on each color $i$ and use this to identify at most $k$ points $Z_i$ of color $i$ such that $\diver(Z_i)\geq d_1$ for some value $d_1$ to be determined. We then partition $\bigcup_i Z_i$ into the disjoint groups $C_1, C_2, \ldots$ where the partition satisfies the property that any two points $x,y\in \bigcup_i Z_i$ such that $d(x,y)<d_2$ are in the same group. Note that $x,z$ will end up in the same group if there exists $y$ such that $d(x,y)<d_2$ and $d(y,z)<d_2$; more generally two points can end up in the same group because of a chain of points where each adjacent pair of points are close. However, in the analysis, we will show that these chains cannot be too long and, for appropriately chosen $d_1$ and $d_2$,  any two points in  $C_j$ are distance $<d_1$ from each other. In the analysis, this will enable us to argue that if $\gamma$ is at most the optimal fair diversity, it is possible to find the required sets $\cS_1, \ldots, \cS_m$.

\smallskip

\noindent
\textbf{Analysis of basic algorithm.} We need a preliminary  lemma that argues that all the points in the same connected component are close together.

\begin{lemma}\label{lem:groupwidth}
For all connected components $C_j$, 
\[\forall x,y \in C_j:~d(x,y) < (m-1)d_2,\]
and $C_j$ does not contain any two points of the same color.
\end{lemma}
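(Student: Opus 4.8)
The plan is to prove both claims simultaneously by bounding the number of vertices on any path inside a single connected component. The key numerical observation is that the two thresholds satisfy $d_1 = m\,d_2$, since $d_1 = \frac{m\gamma}{3m-1}$ and $d_2 = \frac{\gamma}{3m-1}$. The only structural facts I expect to need are: (i) by the construction of each $Z_i$ as a maximal prefix of the \gonzalez ordering in which all points are at least $d_1$ apart, any two points of the \emph{same} color appearing in $Z=\bigcup_i Z_i$ are at distance $\geq d_1 = m\,d_2$; and (ii) every edge of $G_Z$ joins two points at distance $<d_2$.

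First I would bound the length of any simple path of $G_Z$ lying inside a component $C_j$. Suppose such a path had at least $m+1$ vertices. Since there are only $m$ colors, the pigeonhole principle forces two of the first $m+1$ vertices on the path, say $x_a$ and $x_b$ with $a<b\leq m$, to share a color. Walking from $x_a$ to $x_b$ along the path uses $b-a\leq m$ edges, each of length strictly less than $d_2$, so the triangle inequality gives $d(x_a,x_b) < (b-a)\,d_2 \leq m\,d_2 = d_1$. This contradicts fact (i). Hence every simple path inside $C_j$ has at most $m$ vertices, equivalently at most $m-1$ edges.

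Given this bound, both conclusions follow quickly. For the diameter bound, any $x,y\in C_j$ are joined by a simple path (they lie in the same component), and that path has at most $m-1$ edges each of length $<d_2$; the triangle inequality yields $d(x,y)<(m-1)\,d_2$, which is exactly the first claim. For the no-repeated-color claim, if $x,y\in C_j$ shared a color they would satisfy $d(x,y)\geq d_1 = m\,d_2$ by fact (i), yet the diameter bound forces $d(x,y)<(m-1)\,d_2<m\,d_2$, a contradiction.

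The step I expect to be the main obstacle is the first one, and the subtle point there is avoiding a circular argument: the tempting statement ``$C_j$ contains at most one point per color, hence at most $m$ vertices'' is essentially the conclusion we are after, so it cannot be assumed in order to bound path lengths. Routing everything through the pigeonhole argument on $m+1$ consecutive vertices sidesteps this, since it invokes only the same-color separation $d_1=m\,d_2$ and the per-edge bound $d_2$, both of which hold independently of the lemma. I also need to keep the inequalities strict throughout (each edge is strictly below $d_2$), so that even the worst case $b-a=m$ gives $d(x_a,x_b)<d_1$ rather than $\leq d_1$; this strictness is what makes the constant come out as exactly $(m-1)$.
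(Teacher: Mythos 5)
Your proof is correct and takes essentially the same approach as the paper: both arguments show that a path with $m$ or more edges in $G_Z$ would force two same-color points within distance $m\,d_2 = d_1$ of each other, contradicting the construction of the $Z_i$, and then read off the diameter bound from the resulting path-length bound. The only cosmetic difference is that you apply pigeonhole to the first $m+1$ vertices of an arbitrary simple path, whereas the paper extracts a minimal same-color subpath of a shortest path; the underlying contradiction is identical.
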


\begin{proof} \looseness -1
Consider two points $x,y\in C_j$ and let the length of a shortest unweighted path $P_{x,y}$ between $x$ and $y$ in the graph be $\ell$. If $\ell\leq m-1$ then  $d(x,y)< (m-1)d_2$ as required. If $\ell\geq m$ then there exists two points on this path (including end points) that have the same color and this will lead to a contradiction. Consider the subpath $P_{x',y'}\subset P_{x,y}$ where $x'$ and $y'$ have the same color $i$ and all internal nodes have distinct colors. Then the length of $P_{x',y'}$ is strictly less than \[md_2=m\gamma /(3m-1)=d_1\] 
But this contradicts $d(x',y')\geq d_1$ for all points in $Y_i$. 
\end{proof}

The next theorem establishes that when the algorithm does not abort, the solution returned has diversity at least $\gamma/(3m-1)$ and that it never aborts if the guess $\gamma$ is at most the optimum diversity.

\begin{theorem} 
Let $\lsf$ be the optimum diversity. If $\gamma\leq \ell^*_{\text{fair}}$ then the algorithm returns a set of points of the required colors that are each $\geq \gamma/(3m-1)$ apart.  If $\gamma> \ell^*_{\text{fair}}$ then the algorithm either aborts or returns a set of points of the required colors that are each $\geq d_2=\gamma/(3m-1)$ apart.  
\end{theorem}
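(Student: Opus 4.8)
The plan is to split the statement into a \emph{soundness} claim (whenever the algorithm does not abort, the returned set is fair and has diversity $\geq d_2=\gamma/(3m-1)$) and a \emph{completeness} claim (if $\gamma\leq\lsf$ then the maximum $a$-$b$ flow has value $k$, so the algorithm does not abort). The first sentence of the theorem then follows by applying soundness to the non-abort case guaranteed by completeness, and the second sentence is exactly soundness stated contrapositively.

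For soundness I would argue directly from the flow. The capacities on the edges $(a,u_i)$ sum to $\sum_i k_i=k$, so the flow value can never exceed $k$; hence if the algorithm does not abort the value is exactly $k$, which saturates every edge $(a,u_i)$ and forces precisely $k_i$ units through $u_i$. Because each edge $(u_i,v_j)$ and each edge $(v_j,b)$ has capacity $1$, this selects exactly $k_i$ distinct components for color $i$ and uses no component twice across colors. Translating back, $\cS$ has exactly $k_i$ points of color $i$ (fairness) and at most one point from each $C_j$. Two points taken from distinct components $C_j,C_{j'}$ of $G_Z$ are at distance $\geq d_2$, since a shorter distance would create an edge placing them in the same component; therefore $\diver(\cS)\geq d_2=\gamma/(3m-1)$.

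The substantive part is completeness, which I would prove by exhibiting an integral flow of value $k$ built from the optimal fair set $\cF^*=\bigcup_i\cF^*_i$. The first ingredient is a covering claim: when $|Z_i|<k$, the farthest-first property of \gonzalez forces the first discarded candidate to lie within $d_1$ of $Z_i$ and to be the farthest such point, so every point of $\cU_i$ — in particular every $f\in\cF^*_i$ — has some $z(f)\in Z_i$ with $d(f,z(f))<d_1$. The second ingredient is injectivity of the map $f\mapsto(\text{color }i,\ \text{component of }z(f))$: if distinct $f,f'\in\cF^*$ landed in a common component, Lemma~\ref{lem:groupwidth} would give $d(z(f),z(f'))<(m-1)d_2$, and the triangle inequality along $f,z(f),z(f'),f'$ would yield
\[ d(f,f') < 2d_1+(m-1)d_2 = \frac{(2m+m-1)\gamma}{3m-1} = \gamma \leq \lsf, \]
contradicting $\diver(\cF^*)=\lsf$. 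Routing one unit per point of $\cF^*$ along this injective map respects all capacities and produces a flow of value $|\cF^*|=k$.

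The step I expect to be the main obstacle is making this flow construction rigorous when some color satisfies $|Z_i|=k$, in which case \gonzalez has exhausted its budget and the covering claim can fail. I would sidestep routing $\cF^*$ verbatim by instead verifying Hall's condition on the transportation network: for any set $A$ of colors, if some $i\in A$ has $|Z_i|=k$ then the $k$ pairwise-$\geq d_1$ points of $Z_i$ occupy $k$ distinct components (Lemma~\ref{lem:groupwidth} forbids two same-color points in one component), so the neighborhood $N(A)$ of components adjacent to $A$ satisfies $|N(A)|\geq k\geq\sum_{i\in A}k_i$; otherwise every $i\in A$ enjoys the covering claim and the injective map above gives $|N(A)|\geq\sum_{i\in A}k_i$ directly. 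In either case Hall's condition holds, the max flow equals $k$, and the algorithm does not abort, completing the completeness direction and hence the theorem.
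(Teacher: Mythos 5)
Your proof is correct and follows essentially the same route as the paper's: the soundness half is the same one-line observation that points in distinct connected components are $\geq d_2$ apart, and the completeness half uses the identical critical/non-critical split, the same covering map into $Z_i$ with the same arithmetic ($2d_1+(m-1)d_2=\gamma$), and Lemma~\ref{lem:groupwidth} to place the images of the optimal points in distinct components. The only real difference is the finishing move---where the paper greedily assigns components to the non-critical colors and counts what remains, you verify Hall's condition on the color--component bipartite graph, which is an equivalent certificate for the existence of the size-$k$ flow.
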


\begin{proof}
Note that if the algorithm does not abort  then all points are $\geq \gamma/(3m-1)$ apart since any two points in different connected components are $\geq \gamma/(3m-1)$ apart. 

Hence, it remains to argue that if $\gamma\leq \lsf$ then the algorithm does not abort. To argue this, we will construct a flow of size $k$ in the network instance. And to do this it suffices to identify $k_i$  connected components including a point from $Z_i$ for each $i$, such that the resulting set of $k_1+k_2+\ldots + k_m$ connected components are all distinct.  To do this, we start by defining a node $u_i$ to be critical if $|Z_i|<k$ and non-critical otherwise. Let $O_i\subset \cU_i$ be the set of $k_i$ points in the optimum solution. For $x\in \cU_i$, let $f(x)$ be the closest point $Z_i$ to $x$. If $u_i$ is critical, then note that $d(x,f(x))<d_1$. Note that for all points $x,y\in \cup_{i:\textup{critical}} f(O_i)$, \[d(x,y)> \lsf-2d_1\geq \gamma-2\gamma m/(3m-1)=(m-1)d_2\] and hence, by Lemma~\ref{lem:groupwidth}, this implies that all points in $\bigcup_{i:\textup{critical}} f(O_i)$
are in different connected components. We then consider each  non-critical node $u_i$ in turn. Since $u_i$ was non-critical and each connected component has at most one point in each $Z_i$, there are $k$ connected components that include a point in $Z_i$. At most $k -k_i$ need to be used to pick points of other classes and hence at least $k-(k -k_i)=k_i$ remain.
\end{proof}

\smallskip

\noindent
\textbf{Final algorithm.} \looseness -1
Our final algorithm is based on binary searching for a ``good'' guess $\gamma$ for the optimum diversity $\lsf$ where each guess can be evaluated using the basic algorithm  above. The goal is to find a guess that is close to $\lsf$ or larger such that the algorithm does not abort. There are two natural ways to do this; which is best depends on parameters of the data set. 

\smallskip

\noindent
\emph{Binary-searching over continuous range:}
For the first approach, note that $\lsf\in [d_{min},d_{max}]$ where
\[
d_{\min} = \min_{x,y\in X:x\neq y} d(x,y)  ~~~\mbox{ and }~~~
d_{\max} = \max_{x,y\in X} d(x,y) \ .
\]
Hence, there exists a guess $\gamma =(1+\epsilon)^i d_{\min}$ for some \[i\in \{0, 1,2 ,\ldots ,\lceil \log_{1+\epsilon} R \rceil\} \mbox{ where } R:=d_{\max}/d_{\min}\] such that $\lsf/(1+\epsilon) \leq \gamma \leq \lsf$. 
Note that for this guess, the algorithm returns a $(3m-1)(1+\epsilon)$ approximation. We can find this guess (or an even better guess, i.e., a $\gamma>\lsf$ for which the algorithm does not abort) via a binary search over the $1+\lceil \log_{1+\epsilon} R \rceil$ possible guesses. The number of trials  required is 
\[
O(\log( 1+\lceil \log_{1+\epsilon} R \rceil)=O(\log (\epsilon^{-1})+  \log \log  R) 
\]  

\smallskip

\noindent
\emph{Binary-searching over discrete set:} 
For the second approach we note that after the algorithm's initial step (which did not depend on the guess $\gamma$) there are only $km$ points and hence at most $\binom{km}{2}$ distinct distances between remaining points. Hence, it suffices to only consider guesses $\gamma$ such that $d_1$ or $d_2$ corresponds to one of these $O(k^2 m^2)$ values.
We can sort these values in $O(k^2 m^2 \log km)$ time and then binary search over this range to find a good guess using $O(\log km)$ trials.

\medskip

\noindent
\textbf{Final diversification result.} 
Our main theorem of this section follows by combining the binary search over a discrete set approach with the basic algorithm.

\begin{theorem}
There is a  $\tfrac{1}{3m-1}$-approximation algorithm for the fair diversity problem that  runs in time $O(kn + k^{2}m^{2} \log (km))$. 
\end{theorem}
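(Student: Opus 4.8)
The plan is to assemble the final algorithm from two ingredients already in hand: the basic \fMMFlow procedure, whose behavior is pinned down by the preceding theorem, and the discrete binary search over guesses. The first thing I would do is separate the computation that is independent of the guess $\gamma$ from the computation that depends on it. The per-color calls $Y_i \leftarrow \gonzalez(\cU_i,\emptyset,k)$ are executed exactly once, as preprocessing, at total cost $O(k\sum_i |\cU_i|) = O(kn)$; after this step the algorithm only ever touches the at most $km$ retained points. Everything downstream---the prefixes $Z_i$, the graph $G_Z$, its connected components, and the max-flow feasibility test---depends on $\gamma$ only through the two thresholds $d_1 = m\gamma/(3m-1)$ and $d_2 = \gamma/(3m-1)$.

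The key structural observation I would then record is that, as a function of $\gamma$, the output of the basic algorithm is piecewise constant: each $Z_i$ and the edge set of $G_Z$ change only when $d_1$ or $d_2$ crosses one of the at most $\binom{km}{2}$ pairwise distances among the retained points. It therefore suffices to consider the $O(k^2 m^2)$ values of $\gamma$ at which such a crossing can occur. I would sort these candidate values in $O(k^2 m^2 \log(km))$ time and binary search among them for the largest guess on which the algorithm does not abort. Each trial reruns only the $\gamma$-dependent part of \fMMFlow on the $km$-point instance, which is polynomial in $km$ and, amortized over the $O(\log(km))$ trials, is dominated by the sorting cost. Summing the two phases gives the claimed $O(kn + k^2 m^2 \log(km))$ running time.

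For correctness I would invoke the preceding theorem: its first half guarantees that the algorithm never aborts when $\gamma \le \lsf$, and its second half guarantees that whenever a set is returned it has the required colors and diversity $\ge \gamma/(3m-1)$. To legitimize the binary search I would argue that the abort predicate is monotone in $\gamma$, by showing that the maximum flow is non-increasing as $\gamma$ grows: increasing $\gamma$ shrinks each $Z_i$ and raises the edge threshold $d_2$, which can only coarsen the component structure and hence reduce the number of distinct components available to the flow. Given monotonicity, let $\gamma^*$ be the largest non-aborting candidate. If $\gamma^* \ge \lsf$ we are done immediately, since the returned set has diversity $\ge \gamma^*/(3m-1) \ge \lsf/(3m-1)$. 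Otherwise $\lsf$ lies strictly between $\gamma^*$ and the next candidate, so by the piecewise-constant property the algorithm selects the \emph{same} point set at $\gamma^*$ as it would at $\gamma=\lsf$; the true diversity of that set is a property of the points alone and, evaluated at $\gamma=\lsf$, is $\ge \lsf/(3m-1)$. Either way the returned set is a $\tfrac{1}{3m-1}$-approximation.

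I expect the main obstacle to be this correctness argument rather than the running-time bookkeeping. Concretely, the delicate point is verifying the monotonicity of the abort predicate cleanly---tracking simultaneously how the prefixes $Z_i$ shrink and how components of $G_Z$ merge as $d_1$ and $d_2$ increase---and then confirming that restricting attention to the $O(k^2 m^2)$ transition values loses nothing even though $\lsf$ itself need not coincide with any candidate. The piecewise-constant structure is exactly what makes the latter go through, so I would state and prove it explicitly as a lemma before invoking the preceding theorem.
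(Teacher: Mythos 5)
Your proposal follows essentially the same route as the paper: run $\gonzalez$ once per color as $\gamma$-independent preprocessing in $O(kn)$, observe that the basic algorithm's behavior depends on $\gamma$ only through which of the $O(k^2m^2)$ pairwise distances among the retained points the thresholds $d_1,d_2$ exceed, sort those candidates in $O(k^2m^2\log(km))$ time, and binary search with $O(\log(km))$ flow evaluations, each on an instance with $O(km)$ nodes and edges; correctness then comes from the basic-algorithm theorem plus the piecewise-constant observation. One caution: the monotonicity argument you single out as the delicate point is both unnecessary and not actually established by your reasoning --- increasing $\gamma$ shrinks the sets $Z_i$, and \emph{deleting} vertices from $G_Z$ can split connected components rather than coarsen them, so the max flow is not obviously non-increasing in $\gamma$. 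Fortunately the binary search does not need monotonicity of the abort predicate: the preceding theorem guarantees that \emph{every} guess $\gamma\le\lsf$ is non-aborting, so any aborting candidate the search encounters is automatically $>\lsf$, and the search invariant (a non-aborting candidate whose successor aborts) already places $\lsf$ in the interval you need for the piecewise-constant argument to finish the approximation bound. I would drop the monotonicity claim and phrase the binary-search correctness in terms of that one-sided guarantee instead.
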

\begin{proof}
The time  to construct $Y_1, \ldots, Y_m$ is $O(kn)$. We then need to sort the  $O(k^2 m^2)$ distances amongst these points. This takes $O(k^2 m^2 \log (km))$ time. The time to construct and solve the flow instance is $O(k^{2}m^{2})$ since the flow instance has $O(km)$ nodes and $O(km)$ edges~\cite{Orlin13, orlin}. Note that the binary search requires us to construct and solve $O(\log (km))$ flow instances. Hence the total running time is as claimed.
\end{proof}

If we used the binary search over a continuous range approach, the running time would by $O(kn+k^2 m^2 (\log \epsilon^{-1} + \log \log d_{\max}/d_{\min}))$ and the approximation ratio would be $\tfrac{1}{(3m-1)(1+\epsilon)}$.

\subsection[Fair-and-Diverse Selection: Small k, m]{Fair-and-Diverse Selection: Small {\large\boldmath{$k$}}, {\large\boldmath{$m$}}}\label{sec:special}
In this section, we present a simple algorithm that has the advantage of achieving a better approximation ratio than the algorithm in the previous section. The downside of the algorithm is that the running time is exponential in $k$, specifically, $O(kn + k^2(em)^k)$. However, when $m=O(1)$ and $k=o(\log n)$ the dominating term in the running time is $O(kn)$, as in the case of the algorithms from the previous sections.

\smallskip

\noindent
\textbf{Algorithm and intuition.} The basic approach of \fGMM (Algorithm~\ref{algo:smallk}) is to first select $k$ points (or less if there are fewer than $k$ points of a particular color) of each color via the \gonzalez algorithm. The resulting subset $\bigcup_i Y_i$ has at most $km$ points and this is significantly smaller than the original set of points assuming $k$ and $m$ are much smaller than $n$. Hence, it is feasible to solve the problem via exhaustive search on the subset of points. In the analysis, we will be able to show that the optimal fair diversity amongst the subset of points is at least $1/5$ of the optimal fair diversity amongst $\bigcup_i \cU_i$.

\smallskip

\noindent
\textbf{Analysis.} To prove the approximate factor we need to show that the optimal solution amongst the subset of points selected in step one has diversity that is not significantly smaller than the optimal diversity of the original set of points. To show this the basic idea is that for each $i$, the set $Y_i$ will contain at least one  point near every color $i$ point  in the optimal solution or will contain $k$ points such  that even if we remove any set of $k-k_i$ points to make space for points of other colors, the remaining set of $k_i$ points of color $i$ still has  sufficiently high diversity.

\begin{theorem}
\fGMM (Algorithm~\ref{algo:smallk}) returns a $\tfrac{1}{5}$-approximation and the running time is $O(kn + k^2(em)^k)$. Note that this is $O(kn)$ when $k=o(\log n)$ and $m=O(1)$.
\end{theorem}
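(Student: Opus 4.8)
The plan splits into a running-time argument and an approximation argument, and since the former is routine I would dispatch it first. The $m$ calls to \gonzalez cost $O(k|\cU_i|)$ for color $i$ and sum to $O(kn)$, producing sets $Y_1,\dots,Y_m$ each of size at most $k$. The exhaustive search ranges over all ways of choosing $k_i$ points from $Y_i$ for every $i$; since $\bigcup_i Y_i$ contains at most $km$ points and we select $k$ of them in total, the number of candidate selections is at most $\binom{km}{k}\le (em)^k$, and evaluating $\diver$ on a candidate costs $O(k^2)$. This yields the claimed $O(kn + k^2(em)^k)$ bound, which collapses to $O(kn)$ once $m=O(1)$ and $k=o(\log n)$.

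For the approximation ratio, because the search returns the best fair selection inside $\bigcup_i Y_i$, it suffices to exhibit one fair selection $\cS=\bigcup_i \cS_i$ with $\cS_i\subseteq Y_i$, $|\cS_i|=k_i$, and $\diver(\cS)\ge \lsf/5$. I would fix an optimal fair solution $\cF^*=\bigcup_i \cF^*_i$ with $\diver(\cF^*)=\lsf$. The key per-color quantity is the \gonzalez coverage radius $\rho_i=\max_{x\in \cU_i}\min_{y\in Y_i} d(x,y)$, and I would first record the standard \gonzalez guarantee that $\diver(Y_i)\ge \rho_i$ (every pairwise distance inside $Y_i$ is at least the distance to the next point \gonzalez would have added). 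Then I split each color according to the threshold $2\lsf/5$.

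The case analysis runs as follows. If $\rho_i<2\lsf/5$ (which always holds when $Y_i=\cU_i$), I snap each $p\in\cF^*_i$ to its nearest point $f(p)\in Y_i$, moving it by strictly less than $2\lsf/5$, and set $\cS_i=f(\cF^*_i)$. If instead $\rho_i\ge 2\lsf/5$, then $|Y_i|=k$ and $\diver(Y_i)\ge 2\lsf/5$, so the $k$ points of $Y_i$ are mutually far apart and I retain the freedom to pick any $k_i$ of them. To combine, I process all snapped colors first: for distinct optimal points $p,q$ (same or different color) we have $d(p,q)\ge\lsf$, hence $d(f(p),f(q))> \lsf-2\cdot(2\lsf/5)=\lsf/5$, which both confirms $|\cS_i|=k_i$ and makes every pair of snapped points more than $\lsf/5$ apart. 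I then process the remaining colors one at a time: when handling such a color $i$, let $T$ be the (at most $k-k_i$) points already chosen; since $\diver(Y_i)\ge 2\lsf/5$, each point of $T$ lies within $\lsf/5$ of at most one point of $Y_i$, so at most $|T|\le k-k_i$ points of $Y_i$ are blocked and at least $k_i$ remain at distance $\ge\lsf/5$ from all of $T$, and I take any $k_i$ of these. Every cross-color pair is then $\ge\lsf/5$ apart, giving $\diver(\cS)\ge\lsf/5$.

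I expect the cross-color coordination in the second case to be the main obstacle. The blocking count must be exactly tight: each previously selected point eliminates at most one candidate in $Y_i$, leaving precisely enough ($\ge k_i$) unblocked candidates, and this is where the choice of threshold and block radius must be balanced. One needs twice the block radius to be at most $\diver(Y_i)$ for the ``at most one blocked'' claim, and $\lsf$ minus twice the threshold to be at least the block radius for the snapping claim; optimizing these two constraints is exactly what forces the $1/5$ factor (threshold $2\lsf/5$, block radius $\lsf/5$). Verifying the \gonzalez coverage inequality $\diver(Y_i)\ge\rho_i$ and checking that the chosen processing order keeps every cross-color pair at distance at least $\lsf/5$ are the remaining items I would nail down.
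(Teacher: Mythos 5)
Your proposal is correct and follows essentially the same route as the paper's proof: the identical counting for the running time, and for the approximation ratio the same thresholds ($2\lsf/5$ for the snap/coverage dichotomy, $\lsf/5$ for the blocking radius), the same injectivity argument $\lsf - 4\lsf/5 > 0$ for the snapped colors, and the same ``each already-chosen point blocks at most one candidate'' count for the remaining colors. The only cosmetic difference is that you phrase the dichotomy via the \gonzalez coverage radius $\rho_i$ of the full set $Y_i$, whereas the paper uses the maximal prefix $Z_i$ of $Y_i$ with pairwise distances $\geq 2\lsf/5$ and calls $i$ critical when $|Z_i|<k$; both formulations support the identical case analysis.
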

\begin{proof}
	
For the running time, note that Step 1 can be implemented in $O(kn)$ time. For Step 2, note that there are at most $km$ points in $Y_1, Y_2, \ldots Y_m$ so a brute force algorithm needs to consider at most $\binom{km}{ k}\leq  (em)^k$ sets of points and computing the min distance for each takes $O(k^2)$ time. Note that this is $o(n)$ assuming $k=o(\log n)$ and $m$ is constant. 

For the approximation ratio, it suffices to argue that if $\lsf$ is the optimum value then there exists a set of points amongst $Y_1\cup \ldots \cup Y_m$ with the required colors that are $\lsf/5$ apart.  Let $Z_i$ be the maximal prefix of $Y_i$ such that all points at points are $\geq  2\lsf/5$ apart. For each $x \in \cU_i$, let $f(x)$ be the closest point in $Z_i$. Call $i$ critical if $|Z_i|<k$. Note that if $i$ is critical, then  $d(x,f(x))< 2\lsf/5$. Let $O_i$ be the optimal set of color $i$ points and consider the subsets $\cS_1, \cS_2, \ldots \cS_m$ of points in $Z_1, Z_2, \ldots Z_m$ defined as follows:

\begin{itemize}[wide, labelwidth=!,labelindent=0pt,leftmargin=\parindent, itemsep=1pt, topsep=1pt]
\item For all $i$ that are critical, let $\cS_i = f(O_i)$ and let $D=\cup_{i:\textup{critical}} \cS_i$. Note that $\diver(D)> \lsf-4\lsf/5=\lsf/5$.  

\item For each $j$ that is not critical: Remove all points in $Z_j$ that are distance $< \lsf/5$ from a point in $D$. Note that at most one point in $Z_j$ is $< \lsf/5$ from each point in $D$ because points in $Z_j$ are $\geq 2\lsf/5$ apart. Hence, at most $|D|$ points are removed from $Z_j$.

\item Process  the non-critical $j$ in arbitrary order: Pick $k_j$ points $\cS_j$ arbitrarily from $Z_j$. Remove all points from $Z$ that are distance $< \lsf/5$ from a point in $\cS_j$. This removes at most $k_j$ points from each $Z_i$. Note that when we process $j$ there are at least $k-(\sum_{i:\cS_i~\textup{defined so far}} k_i)\geq k-(k-k_j)=k_j$ points in $Z_j$.
\end{itemize}
Note $\diver(\bigcup_i \cS_i)\geq \lsf/5$ and this implies the claimed approximation factor.
\end{proof}
	\newcommand{\Genswap}{
	\begin{algorithm}[H]
		\caption{\fMMGSwap: Overlapping classes for $m=2$} \label{algo:swap2}
		{\small
			\begin{algorithmic}[1]
				\Statex 
				\begin{description}
					\item[\rlap{Input:}\phantom{Output:}] $\cU_1, \cU_2$: Universe of available elements
					\item[\phantom{Output:}] $\gamma \in \mathbb{R}$: A guess on the optimum fair diversity
					\item[\phantom{Output:}] $k_1, k_2  \in \mathbb{Z}^{+}$
					\item[Output:] at least $k_i$ points in $\cU_i$ for $i\in \{1,2\}$
    			\end{description}
				\Procedure{\fMMGSwap}{}
					\State $\cS_{\{1, 2\}} \leftarrow$ maximal subset of $X_{\{1, 					2\}}$ with all points $\geq \gamma/ 4$ apart \label{line:set1}
					
					\State $\setB \leftarrow$ all the points in $\cU$ that $< 					\gamma/4$ apart from a point in $\cS_{\{1, 2\}}$ \label{ln:close} 					
					\State $\setA \leftarrow \cU \setminus \setB$ 					\Comment{$\setA = \setA_{\{1\}} \cup \setA_{\{2\}} \subseteq X_{\{1\}} \cup X_{\{2\}}$}
					\Statex 
					\LineCommentx{Select the missing points to satisfy the constraints:}
					\State Set $t = |\cS_{\{1, 2\}}|$

					\If{$|\setA \cap \cU_i| \geq k_i-t$ for $i \in \{1, 2\}$} 
						\State $\cS_{\{1\}} \cup \cS_{\{2\}}\leftarrow $\Call{\fMMSwap}{$\setA, k_1-t, k_2 - t$} 
						\State $\cS \leftarrow \cS_{\{1\}} \cup \cS_{\{2\}} \cup \cS_{\{1, 2\}}$ 
						\Else \State $\cS \leftarrow \emptyset$ \Comment{Abort}
					\EndIf 
					\Statex
					\Return $\cS$ 
				\EndProcedure
			\end{algorithmic}
		}
	\end{algorithm}
}

\newcommand{\GenFlow}{
	\begin{algorithm}[H]
		\caption{\fMMGFlow: Overlapping classes for $m\geq3$}\label{algo:general} 
		{\small
		\begin{algorithmic}[1]
			\Statex
			\begin{description} 
				\item[\rlap{Input:}\phantom{Output:}] $\cU_1, \ldots, \cU_m$: Universe of available elements \smallskip
				\item[\phantom{Output:}] $c_L\in \mathbb{Z}^{+}$ for all $L\subset [m]$: A guess of the flow distribution \smallskip
				\item[\phantom{Output:}] $\gamma\in \mathbb{R}$: A guess of the optimum fair diversity \smallskip
				\item[\phantom{Output:}] $k_1, \ldots, k_{m} \in \mathbb{Z}^{+}$ 
				\item[Output:] at least $k_i$ points in $\cU_i$ for $i\in [m]$
				\end{description} 
			\Procedure{\fMMGFlow}{}
				\State Define \[d_1\leftarrow \frac{\comb \gamma}{3\comb-1} ~~~\mbox{ and }~~~ d_2\leftarrow \frac{\gamma}{3\comb-1}\]
				\State $Z_{[m]} \leftarrow $ maximal subset of $X_{[m]}$ with all points $\geq d_1$ apart \label{ln:Z}
				\For{$t=m-1, m-2, \ldots, 1$}
					\For{all sets of $L$ of size $t$}\label{ln:loop}
					\State $Z_{L} \leftarrow$ maximal subset of $X_{L}$ such that all points in $Z_L$ \Statex \hspace{.8in} are $\geq d_1$ from every other element in $$Z_{L} \cup \bigcup_{L'\in [m]: |L'|\geq t+1, L\subset L'} Z_{L'}$$
					\EndFor
				\EndFor
				\State Construct undirected graph $G_Z$ with nodes $Z=\bigcup_{L\subset [m]} Z_L$ 
				\Statex \hspace{\algorithmicindent}and edges $(z_1,z_2)$ if $d(z_1,z_2)< d_2$
				\State $C_1, C_2, \ldots C_t\leftarrow$ Connected components of $G_Z$
				\LineCommentx{Construct flow graph}
				\State Construct directed graph $G=(V,E)$ where
				\begin{eqnarray*}
					V&=& \{a, v_1, \ldots, v_t, b\} \cup \bigcup_{L\subset [m]: |L|>0}\{u_L\} \\
					E&=& \{(a,u_L) \mbox{ with capacity $c_L$}: \mbox{non-empty} \ L\subset [m] \} \\
					& & \cup~ \{(v_j,b) \mbox{ with capacity $1$}: j \in [t]\} \\
					& & \cup~  \{(u_L,v_j) \mbox{ with capacity $1$}:  |Z_L \cap C_j|\geq 1\} 
				\end{eqnarray*} 
				
				\State Compute max $a$-$b$ flow. 
				\If{flow size $<\sum_{L\subset [m]} c_L$}
					\Return $\emptyset$ \Comment{Abort}
				\Else 
					\State $\forall(u_L,v_j)$ with flow add a node in $C_j\in (\cap_{i\in L} \cU_i)$ to $\mathcal{S}$
				\EndIf
				\Return $\mathcal{S}$
			\EndProcedure
		\end{algorithmic}
	}
	\end{algorithm}	
}

\section{\mbox{Generalizing to Overlapping Groups}}\label{sec:overlap}
In this section, we show how we can extend our algorithmic framework to allow the elements in the universe $\cU$ to belong to multiple classes, e.g., an individual may belong to multiple demographic groups such as multiple races, or combinations of race, gender, and other sensitive demographics. First, we formally define the problem and show how our \fMMSwap and \fMMFlow algorithms can be adapted to support this generalized setting. 

We assume a universe of elements $\cU$ comprising of $m$ \emph{possibly} overlapping classes $\cU_1, \cU_2, \dots,  \cU_m$, a pseudometric distance function $d: \mathcal{U} \times \mathcal{U}\rightarrow \mathbb{R}_{0}^{+}$ and a set of fairness constraints $\langle k_1,\dots,k_m\rangle$ where each $k_i$ is a non-negative integer with $k_i \leq |\cU_i|$. Our goal is to identify a set $\mathcal{S}\subseteq\mathcal{U}$ to satisfy the fairness constraints such that the minimum distance of any two items in $\mathcal{S}$ is maximized. 

It will be convenient to introduce some additional notation. For any $L \subset [m]$, define \[X_L = \big (\bigcap_{i\in L} \cU_i \big ) \cap \big (\bigcup_{j\not \in L} \cU_j \big )\] 

That is, $X_L$ consists of all elements exactly in the classes of $L$ and no others. Note that if we select an element in $X_L$ it contributes to helping satisfy $|L|$ of the fairness constraints. Hence, it may be possible to satisfy all the constraints by picking fewer than $k_1+\ldots+k_m$ elements. Further, a feasible solution may require more than $k_i$ elements for class $i$ (example in Figure~\ref{fig:overlap}). Formally, we define the problem as follows:
\vspace{-2mm}
\begin{flalign*}
	\fMMG: \ & \maximize_{\mathcal{S} \subseteq \mathcal{U}} \quad \min_{\substack{u,v\in 	\mathcal{S}\\ u\neq v}}d(u,v) & \\
		& \textup{subject to} \ |\mathcal{S} \cap \mathcal{U}_i| \geq k_i, \ \forall i \in [m] 
\end{flalign*}

\begin{figure}[t]
	\centering
	\includegraphics[width=0.5\textwidth]{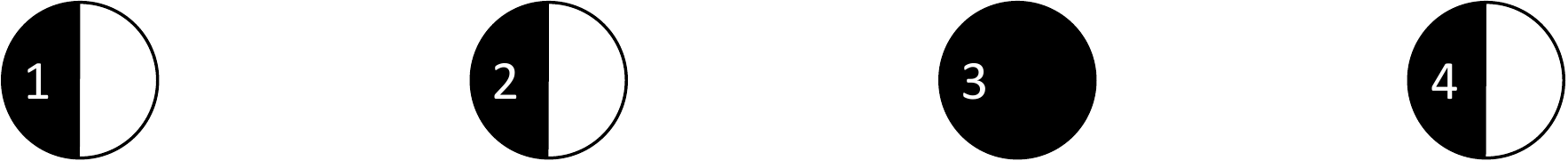}
	\caption{An example with $m=2$ overlapping classes, with $|\cU_1| = 3$ and 	$|\cU_2| = 4$, where (a)~the fairness constraints can be satisfied with 	fewer than $k$ elements and (b)~a class has to be overrepresented to satisfy 	the fairness constraints for all classes. Suppose we have to pick two white 	and one black element $(k=3)$. A feasible solution consists of two bi-colored 	elements, thus fewer than $k$, in which the black class is represented by two 	and not just one element.} \vspace{-3mm}
	\label{fig:overlap}
\end{figure}

\subsection{Fair-and-Diverse Selection (Overlaps): {\large\boldmath{$m=2$}}}
\label{sec:swapover}

In the binary setting, the input is a set of points $\cU$ that comprises of
$m=2$ overlapping classes; $\cU_1 = X_{\{1\}} \cup X_{\{1, 2\}}$ and $\cU_2 =
X_{\{2\}} \cup X_{\{1, 2\}}$. We design a swap-based algorithm, with
$1/4$-approximation guarantee, which uses the idea of binary searching over a
discrete set of guesses for the optimum fair diversity, denoted as $\lsf$.

\smallskip

\noindent
\textbf{Algorithm and intuition.} The \fMMGSwap algorithm (Algorithm~\ref{algo:swap2}) takes as input a guess $\gamma$ for the optimum fair diversity. We show that if $\gamma \leq \lsf$, we can always find enough points to construct a fair set $\cS = \cS_{\{1\}} \cup \cS_{\{2\}} \cup \cS_{\{1, 2\}}$ with $\diver{(\cS)} \geq \gamma/4$ (where $\cS_L=\cS\cap X_L$).

\looseness -1
The algorithm first finds as many points as possible in $X_{\{1,2\}}$ and are
at least $\tfrac{\gamma}{4}$ apart from each other. Let $\cS_{\{1, 2\}}$ be the
resulting set, with a total of $t$ points. Note that to satisfy the fairness
constraints, we need to add $k_i - t$ points for each class i in $\{1, 2\}$.
The algorithm proceeds to remove all points in $\cU$ that are closer than
$\tfrac{\gamma}{4}$ from any point in $\cS_{\{1, 2\}}$. It is easy to see that
all remaining points, $\cS^+$, can only belong to one class, i.e., $\cS^+\cap
X_{\{1,2\}}=\emptyset$ (because all points that did not make it to $\cS_{\{1,
2\}}$ have to be closer than $\tfrac{\gamma}{4}$ from some point in $\cS_{\{1,
2\}}$). Since $\cS^+$ does not have overlapping classes, we can execute
\fMMSwap (Algorithm~\ref{algo:binary}) on it to select a set with $k_i - t$
points for each class i in $\{1, 2\}$. In our analysis, we show that $\cS^{+}$
contains at least $k_1-t$ and $k_2-t$ points from $X_{\{1\}}$ and $X_{\{2\}}$
that are $\geq \gamma$ apart from each other. Thus, the \fMMGSwap algorithm
will produce a set of points that are at least $\gamma/4$ apart from each other.

\begin{theorem}\label{thm:swapover}
\fMMGSwap (Algorithm~\ref{algo:swap2}) is a polynomial-time algorithm with $1/4$-approximation guarantee for the fair diversification problem with $m=2$ overlapping classes. 
\end{theorem}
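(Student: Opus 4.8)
The plan is to first analyze a single invocation of \fMMGSwap for a fixed guess $\gamma$, proving the key lemma: if $\gamma\le\lsf$ then the algorithm does not abort and returns a feasible (fair) set $\cS$ with $\diver(\cS)\ge\gamma/4$; the full $1/4$-approximation then follows by searching for a good guess. Throughout, let $\cF^*$ be an optimal fair solution, so $\diver(\cF^*)=\lsf$ and any two points of $\cF^*$ are at distance $\ge\lsf\ge\gamma$. I would decompose $\cF^*=\cF^*_{\{1\}}\cup\cF^*_{\{2\}}\cup\cF^*_{\{1,2\}}$ according to which classes each point lies in, and recall $t=|\cS_{\{1,2\}}|$.

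First I would record two structural facts. Because $\cS_{\{1,2\}}$ is a \emph{maximal} subset of $X_{\{1,2\}}$ with all points $\ge\gamma/4$ apart, every bi-colored point not chosen is within $\gamma/4$ of $\cS_{\{1,2\}}$ and hence lies in $\setB$; thus $\setA\cap X_{\{1,2\}}=\emptyset$, so $\setA=\setA_{\{1\}}\cup\setA_{\{2\}}$ consists only of single-colored points and is a legal input to \fMMSwap. Moreover, points of $\setA$ are by definition $\ge\gamma/4$ from every point of $\cS_{\{1,2\}}$.

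The heart of the argument is a counting step showing that $\setA$ retains enough of the optimum. I would map each point of $\cF^*\cap\setB$ to a point of $\cS_{\{1,2\}}$ within $\gamma/4$ of it; two points of $\cF^*$ sent to the same center would be $<\gamma/2<\gamma\le\lsf$ apart, contradicting $\diver(\cF^*)=\lsf$, so this map is injective and $|\cF^*\cap\setB|\le t$. Since all bi-colored optimum points lie in $\setB$, writing $b=|\cF^*_{\{1,2\}}|$ gives $b\le t$; combining $|\cF^*\cap\cU_1|=|\cF^*_{\{1\}}|+b\ge k_1$ with $b+|\cF^*_{\{1\}}\cap\setB|+|\cF^*_{\{2\}}\cap\setB|=|\cF^*\cap\setB|\le t$ yields $|\cF^*_{\{1\}}\cap\setA|\ge k_1-t$, and symmetrically $|\cF^*_{\{2\}}\cap\setA|\ge k_2-t$. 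These surviving points are single-colored, lie in $\setA$, and are pairwise $\ge\gamma$ apart; hence the feasibility test passes (no abort) and the optimal fair diversity of the disjoint instance $(\setA,k_1-t,k_2-t)$ is $\ge\gamma$. (When some $k_i-t\le 0$ the corresponding requirement is vacuous and the claim is trivial.)

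To finish the single-guess lemma I would bound $\diver(\cS)$ for $\cS=\cS_{\{1,2\}}\cup\cS_{\{1\}}\cup\cS_{\{2\}}$ by three cases: pairs inside $\cS_{\{1,2\}}$ are $\ge\gamma/4$ by construction; a point of $\cS_{\{1\}}\cup\cS_{\{2\}}\subseteq\setA$ and a point of $\cS_{\{1,2\}}$ are $\ge\gamma/4$ by the defining property of $\setA$; and pairs inside $\cS_{\{1\}}\cup\cS_{\{2\}}$ are $\ge\gamma/4$ because \fMMSwap is a $1/4$-approximation on the disjoint instance whose optimum we just bounded below by $\gamma$. Fairness follows since $|\cS\cap\cU_i|=t+(k_i-t)=k_i$. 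Finally, observing that $\lsf$ is itself one of the $O(n^2)$ pairwise distances, the full algorithm runs \fMMGSwap over these candidate guesses (or binary-searches them, as in Section~\ref{sec:flowAlg}) and returns the feasible output of largest actual diversity; the guess $\gamma=\lsf$ certifies this is $\ge\lsf/4$, and each invocation is polynomial, giving the claimed polynomial-time $1/4$-approximation. I expect the counting step of the third paragraph---simultaneously controlling how many optimum points are deleted into $\setB$ and how the bi-colored optimum points are distributed across the two constraints---to be the main obstacle.
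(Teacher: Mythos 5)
Your proposal is correct and follows essentially the same route as the paper's proof: the same maximality argument showing $\setA$ contains only single-colored points at distance $\geq\gamma/4$ from $\cS_{\{1,2\}}$, the same injective mapping of optimal points in $\setB$ to points of $\cS_{\{1,2\}}$ (bounding the loss by $t$ per class), the same reduction of the residual instance to the disjoint-class \fMMSwap guarantee, and the same binary search over pairwise distances to realize the guess $\gamma=\lsf$. The only additions are your explicit treatment of the cross-pair case and of $k_i-t\le 0$, which the paper leaves implicit.
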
	 

\begin{proof}
Let $O=O_{\{1\}}\cup O_{\{2\}} \cup O_{\{1, 2\}}$ be the set that maximizes
diversity and satisfies the fairness constraints. Let $\lsf = \diver{(O)}$, which implies that $d(o_1, o_2) \geq \lsf$ for any pair of optimal elements $o_1, o_2 \in O$. We will show that for any guess $\gamma \leq \lsf$, Algorithm~\ref{algo:swap2} returns a set $\cS = \cS_{\{1\}} \cup \cS_{\{2\}} \cup \cS_{\{1,2\}}$ with $\diver(\cS) \geq \gamma/4$. First, note that by the definition of $\cS_{\{1,2\}}$ set, it holds that:
 
\begin{equation*} \label{eqgen:eq1}
\diver(\cS_{\{1,2\}}) \geq \gamma/4 
\end{equation*}

Next, notice that the $\setB$ set in line~\ref{ln:close} of Algorithm~\ref{algo:swap2} consists of all the points in $X_{\{1, 2\}}$, and any single-colored points $<\gamma/4$ apart from some point in $\cS_{\{1,2\}}$. As a result, we know that: (1)~all the points remaining in $\setA$ are greater or equal than $\gamma/4$ apart from all the points in $\cS_{\{1,2\}}$, and (2)~$\setA$ only contains single-colored points (if there were any bi-colored elements $\geq \gamma/4$ apart from the points in $\cS_{\{1,2\}}$, they would have been added to $\cS_{\{1,2\}}$). We express $\setA$ as follows:  

\[\setA= \setA_{\{1\}} \cup \setA_{\{2\}} \ \text{with} \ \setA_{\{i\}} \subseteq X_{\{i\}} \ \text{for} \ i \in \{1, 2\}\]

Let $t= |\cS_{\{1,2\}}|$. We argue that for any guess $\gamma \leq \lsf$, $\setA$ contains at least $k_i -t$ elements for $i \in \{1, 2\}$ that are $\geq \gamma$ apart. Thus, \fMMSwap will be able to find a set of points to satisfy the fairness constraints that are at least $\gamma/ 4$ apart. Define $c^{-}_{\{1\}}, c^{-}_{\{2\}}, c^{-}_{\{1, 2\}}$ to be the number of  optimal points in $O_{\{1\}}, O_{\{2\}}$ and $O_{\{1, 2\}}$ present in 
$\setB$. Now, notice that: 

\[c^{-}_{\{1\}} + c^{-}_{\{2\}} + c^{-}_{\{1, 2\}} \leq t\]

This holds because at most one optimal point point can be $< \gamma / 4$ from a point in $\cS_{\{1, 2\}}$. Suppose that there exist a pair of optimal points $o_1, o_2 \in O$, and a point $x \in \cS_{\{1, 2\}}$ such that  
$d(o_1, x) < \gamma / 4$ and $d(o_2, x) < \gamma / 4$. Then we derive a contradiction by applying the triangle inequality as: $d(o_1, o_2) \leq d(o_1, x) +d(x, o_2) < \gamma/2 < \lsf/2$. Consequently, it now follows that $\setA$ contains at least $k_1 - c^{-}_{\{1\}}- c^{-}_{\{1, 2\}} \geq k_1 -t$ optimal points of $O_{\{1\}}$, and $k_2 - c^{-}_{\{2\}} - c^{-}_{\{1, 2\}} \geq k_2 -t$ of $O_{\{2\}}$, which by definition of $O$ are greater or equal than $\gamma$ apart. 

So \fMMSwap will be able to find a set $\cS_{\{1\}} \subseteq \setA_{\{1\}}$ and $\cS_{\{2\}} \subseteq \setA_{\{2\}}$ with the required number of elements such that $\diver(\cS_{\{1\}} \cup \cS_{\{2\}}) \geq \gamma/4$. Thus, we get that $\diver(S) \geq \gamma/4$. If we perform a binary search over all the pairwise distances of the points in $\cU$, we will find a guess $\gamma = \lsf$, which implies the claimed approximation factor for \fMMGSwap. 
\end{proof}

\begin{figure}[t]
	\centering
\scalebox{0.9}{
	\begin{minipage}[t]{\textwidth}
		\Genswap
	\end{minipage}}
\end{figure}	

\subsection{Fair-and-Diverse Selection (Overlaps): {\large\boldmath{$m\geq 3$}}}
\label{sec:flowover}

The algorithm in this section is an extension of \fMMFlow (Algorithm ~\ref{algo:m3}); the previous algorithm did not apply in the case when classes could overlap whereas the new algorithm will. Throughout this section, it will be convenient to use the following notation: $M:=\binom{m}{\lfloor{m/2}\rfloor}$. The approximation factor for the algorithm designed in this section will be $3M-1$ in contrast to the $3m-1$ approximation for the non-overlapping case. Note that for $m=2, 3, 4,  5$ we have $M=2, 3, 6, 10$, i.e., when the number of classes is small, $M$ is still relatively small. 

There are two main steps that need to be changed in the overlapping case: 1) defining a subset $Z$ of the elements that will be considered and 2) determining how many points to use that appear in multiple classes. We discuss each in turn.

\medskip
\noindent
\textbf{Defining $Z$.} Recall that the first main part of \fMMFlow (Algorithm~\ref{algo:m3}) was to select a subset of points of each color such that all points in each subset was a certain distance apart. When there are overlapping classes, we need to revisit how this is done. Motivated by the fact that an element in $X_{L'}$ contributes to at least as many fairness constraints as an element in $X_L$ if $L\subset L'$, when we select a subset of points in $\cU_i$ we want to prioritize points that are also
in other classes. 

For example, for $m=3$ we have: 
\begin{eqnarray*}
\cU_1 &=& X_{\{1\}}\cup X_{\{1,2\}}\cup X_{\{1,3\}}\cup X_{\{1,2,3\}}\\
\cU_2 &=& X_{\{2\}}\cup X_{\{1,2\}}\cup X_{\{2,3\}}\cup X_{\{1,2,3\}}\\
\cU_3 &=& X_{\{3\}}\cup X_{\{1,3\}}\cup X_{\{2,3\}}\cup X_{\{1,2,3\}} \ . 
\end{eqnarray*}

Consistent with ``prioritizing points'' in multiple classes, we construct
subsets of $\cU_1, \cU_2, \cU_3$ by first constructing a maximal subset
$Z_{\{1,2,3\}}\subset X_{\{1,2,3\}}$ such that the pairwise distance of all
points is at least $d_1$. We then define a maximal subset $Z_{\{1,3\}} \subset
X_{\{1,3\}}$ such that every point is at least $d_1$ from each other point in
$Z_{\{1,3\}}$ and from points in $Z_{\{1,2,3\}}$. We construct $Z_{\{1,2\}}$
and $Z_{\{2,3\}}$ similarly. Finally $Z_{\{1\}}$ is a maximal subset of
$X_{\{1\}}$ such that every point is at least $d_1$ from each other point in
$Z_{\{1\}}$ and from every point in $Z_{\{1,2\}}\cup Z_{\{1,3\}}\cup
Z_{\{1,2,3\}}$. Lines~\ref{ln:Z}--\ref{ln:loop}
in Algorithm~\ref{algo:general} generalize this process
to arbitrary $m$.

\looseness-1
Note that we ensure the property that all points in $Z_L$ are at least $d_1$
far from each other and from any point in $\bigcup_{L':L\subset L'} Z_{L'}$ but
the subset of elements picked from $\cU_1$, i.e., $Z_{\{1\}}\cup
Z_{\{1,2\}}\cup Z_{\{1,3\}}\cup Z_{\{1,2,3\}} \subset \cU_1$, no longer
satisfies the condition that they are all at least $d_1$ far from one another.
In particular, there may exist points $x\in Z_L$ and $y\in Z_{L'}$ such that
$d(x,y)<d_1$ if neither $L$ or $L'$ is a subset of the other.\footnote{This is
a generalization of the case when there was no-overlap. In that case there
could exist $x\in Z_i$ and $y\in Z_j$ such that $d(x,y)<d_1$.} A natural
question, and an issue that will arise in our analysis is how many sets can
there be such that no set is a subset of another. Fortunately, the following
classic result in extremal combinatorics resolves this question.

\begin{figure}[t]
	\centering
\scalebox{0.9}{
	\begin{minipage}[t]{\textwidth}
		\GenFlow
	\end{minipage}}
\end{figure}	

\begin{lemma}[Sperner's Lemma] \label{def:anti-chain}\label{lem:sperner}
A collection of sets is called an \emph{anti-chain} if none of the sets is a subset of another set. If all sets are subsets of $[m]$ then the maximum size of such a  collection is $M=\binom{m}{\lfloor m/2 \rfloor}$.
\end{lemma}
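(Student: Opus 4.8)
The plan is to prove the stated equality by establishing the two bounds separately: exhibiting an antichain of size $\binom{m}{\lfloor m/2\rfloor}$, and showing no antichain can exceed this. The lower bound is immediate. The collection of all subsets of $[m]$ of cardinality $\lfloor m/2\rfloor$ is an antichain, since two distinct sets of equal cardinality can never stand in a containment relation, and this collection has exactly $\binom{m}{\lfloor m/2\rfloor}$ members. So the work is entirely in the upper bound.

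For the upper bound I would route through the Lubell--Yamamoto--Meshalkin (LYM) inequality, which states that for any antichain $\mathcal{A}$ of subsets of $[m]$,
\[
\sum_{A \in \mathcal{A}} \frac{1}{\binom{m}{|A|}} \leq 1.
\]
Granting LYM, the theorem falls out quickly: each denominator $\binom{m}{|A|}$ is at most the central coefficient $\binom{m}{\lfloor m/2\rfloor}$, so replacing every denominator by this maximum only decreases each summand, giving $|\mathcal{A}| / \binom{m}{\lfloor m/2\rfloor} \leq \sum_{A\in\mathcal{A}} 1/\binom{m}{|A|} \leq 1$, i.e.\ $|\mathcal{A}| \leq \binom{m}{\lfloor m/2\rfloor}$.

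The heart of the argument, and the step I expect to require the most care, is the double-counting proof of LYM via maximal chains. I would define a maximal chain to be a sequence $\emptyset = C_0 \subset C_1 \subset \cdots \subset C_m = [m]$ with $|C_i| = i$; these are in bijection with orderings of the ground set, so there are exactly $m!$ of them. The key enumeration is that a fixed set $A$ with $|A|=k$ lies on precisely $k!\,(m-k)!$ maximal chains: one chooses the order in which the $k$ elements of $A$ are inserted, then the order of the remaining $m-k$ elements. Because $\mathcal{A}$ is an antichain, any single maximal chain can contain at most one member of $\mathcal{A}$, since all sets on a chain are pairwise comparable. Counting the pairs $(\text{chain}, \text{member of }\mathcal{A}\text{ on it})$ in two ways then yields
\[
\sum_{A \in \mathcal{A}} |A|!\,(m-|A|)! \;\leq\; m!,
\]
and dividing through by $m!$ turns each term $|A|!\,(m-|A|)!/m!$ into $1/\binom{m}{|A|}$, which is exactly LYM.

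The only genuinely delicate point to flag is the ``at most one member per chain'' claim, which is precisely where the antichain hypothesis is consumed: a maximal chain is totally ordered by inclusion, so if it met two members of $\mathcal{A}$ they would be comparable, contradicting antichain-ness. Everything else is elementary combinatorial bookkeeping. As an alternative I could instead invoke a symmetric chain decomposition of the Boolean lattice into $\binom{m}{\lfloor m/2\rfloor}$ chains and observe that an antichain meets each chain at most once, but the LYM route is self-contained and avoids having to construct the decomposition explicitly.
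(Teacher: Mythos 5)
Your proof is correct and complete: the lower bound via the middle layer, the reduction of the upper bound to the LYM inequality, and the double-counting proof of LYM over the $m!$ maximal chains are all sound, and you correctly isolate the one place the antichain hypothesis is used (at most one member of $\mathcal{A}$ per chain). The paper itself offers no proof of this lemma --- it invokes Sperner's theorem as a classic result from extremal combinatorics and immediately applies it in the analysis of the overlapping-groups algorithm (Lemma~\ref{lem:groupwidthGen}) --- so there is nothing to compare against; your argument is the standard self-contained proof one would supply if the citation were to be replaced by a derivation, and either of your two routes (LYM or a symmetric chain decomposition) would serve equally well for the paper's purposes, since the algorithm only needs the upper bound $|\mathcal{A}|\leq\binom{m}{\lfloor m/2\rfloor}$.
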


Next, recall that \fMMFlow (Algorithm \ref{algo:m3}) then constructs a graph
$G_Z$ where the nodes are the selected points and there are edges between
points if this distance is $<d_2$. The new algorithm proceeds similarly but
with new parameters: $d_1\leftarrow \frac{\comb \gamma}{3\comb-1}$, and
$d_2\leftarrow \frac{\gamma}{3\comb-1}$. With this setting of the parameters
and appealing to Lemma \ref{lem:sperner} we prove an upper bound on the
distance between any two points in the same connected components: 

\begin{lemma}\label{lem:groupwidthGen}
For all connected components $C_j$, \[\forall x,y \in C_j:~d(x,y) < \left(\comb-1\right)d_2,\]
and $C_j$ does not contain any two points $a,b$ such that $a\in X_L$ and $b\in X_{L'}$ where $L\subset L'$. 
\end{lemma}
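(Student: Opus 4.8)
The plan is to mirror the proof of Lemma~\ref{lem:groupwidth} from the non-overlapping case, but to replace the notion of ``same color'' with the subset partial order on class-labels, and to replace the pigeonhole bound of $m$ colors with the antichain bound supplied by Sperner's Lemma (Lemma~\ref{lem:sperner}). For a node $w$ of $G_Z$, write $\lambda(w)$ for the unique set $L$ with $w\in Z_L$ (equivalently $w\in X_L$). The first thing I would record is the separation property guaranteed by the construction in Lines~\ref{ln:Z}--\ref{ln:loop}: if $\lambda(a)$ and $\lambda(b)$ are comparable (one contained in the other, possibly equal) and $a\neq b$, then $d(a,b)\geq d_1$. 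For equal labels this is the within-set condition; for $\lambda(a)\subsetneq\lambda(b)$ it follows because $Z_{\lambda(b)}$ is built before $Z_{\lambda(a)}$ (sets are processed in decreasing size), and the construction explicitly forces every point of $Z_{\lambda(a)}$ to be $\geq d_1$ from every point of $\bigcup_{\lambda(a)\subset L'} Z_{L'}$. This is the exact analogue of ``two points of the same color are $\geq d_1$ apart.''

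Next I would prove the central claim: along any shortest (unweighted) path $P=w_0 w_1\cdots w_s$ in $G_Z$, the labels $\lambda(w_0),\dots,\lambda(w_s)$ form an \emph{antichain}. Suppose not, and among all index pairs $(p,q)$ with $p<q$ and $\lambda(w_p),\lambda(w_q)$ comparable, pick one minimizing $q-p$. Minimality forces $\lambda(w_p),\dots,\lambda(w_{q-1})$ to be pairwise incomparable, hence $q-p$ distinct sets forming an antichain, so $q-p\leq \comb$ by Sperner. But then the subpath $w_p\cdots w_q$ has length $q-p\leq \comb$, and since every edge of $G_Z$ joins points at distance $<d_2$, the triangle inequality gives $d(w_p,w_q)<(q-p)d_2\leq \comb d_2=d_1$, where I use $d_1=\comb\gamma/(3\comb-1)=\comb d_2$. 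This contradicts the separation property, so no comparable pair exists and the labels form an antichain; by Sperner again, $s+1\leq \comb$, i.e.\ every shortest path has length at most $\comb-1$.

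The two conclusions of the lemma then fall out. For any $x,y\in C_j$, take a shortest path between them: it has length $\leq \comb-1$ and each of its edges contributes $<d_2$, so $d(x,y)<(\comb-1)d_2$. For the structural part, if $C_j$ contained two distinct points $a,b$ with comparable labels, the separation property would give $d(a,b)\geq d_1=\comb d_2$, while the bound just proved gives $d(a,b)<(\comb-1)d_2<\comb d_2$, a contradiction; hence $C_j$ contains no two points whose labels are comparable, in particular none with $L\subset L'$.

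I expect the main obstacle to be the middle step, specifically the ``minimal-gap comparable pair'' argument that manufactures a genuine antichain of size exactly $q-p$ so that Sperner's bound applies to a single subpath. In the non-overlapping proof the analogous step is trivial because ``same color'' is an equivalence relation, so one simply takes the shortest monochromatic subpath; here the subset relation is only a partial order, so I must check that minimality of $q-p$ really rules out every \emph{internal} comparability and that the separation property is invoked for comparable (not merely equal) labels. The numeric identity $d_1=\comb d_2$ is exactly what makes the antichain length bound collide with the $d_1$-separation, and getting the off-by-one bookkeeping right (an antichain of $q-p$ sets versus a subpath of $q-p$ edges) is the only place the argument can quietly go wrong.
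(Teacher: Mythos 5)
Your argument is correct and follows essentially the same route as the paper's proof: both rely on the separation property that comparable-labeled points of $Z$ are $\geq d_1$ apart, extract a minimal subpath whose endpoints have comparable labels so that the intervening labels form an antichain, and then combine Sperner's bound of $\comb$ with the identity $d_1=\comb d_2$ and the triangle inequality to derive the contradiction. Your reorganization (first establishing that all labels on a shortest path form an antichain, then reading off both conclusions) is a cosmetic difference only; the bookkeeping of $q-p$ sets versus $q-p$ edges is handled correctly.
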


\begin{proof}
Consider two points $x,y\in C_j$ and let the length of a shortest unweighted path $P_{x,y}$ between $x$ and $y$ in the graph be $\ell$. If $\ell\leq \comb-1$ then $d(x,y)< (\comb -1)d_2$ as required. If $\ell\geq \comb$ then
by Lemma~\ref{lem:sperner}, there must exist two points on this path (including end points) in $X_L$ and $X_{L'}$ such that $L$ and $L'$ are \emph{comparable}, i.e., $L$ is a subset of $L'$ or vice versa and  this will lead to a contradiction. Consider the  subpath $P_{x',y'}\subset P_{x,y}$ such that $x'\in X_L$ and $y'\in X_{L'}$ for some comparable $L$ and $L'$. If the internal nodes are $x_1, x_2, \ldots $ and these belong to sets $X_{L_1}, X_{L_2} ,\ldots$ then by definition of $x'$ and $y'$, the collection of sets $\{L_1, L_2, \ldots, L'\}$ is  an anti-chain and hence the size of this collection is at most $\comb$ by Lemma \ref{lem:sperner}. Hence, the length of the path between $x'$ and $y'$ is also at most $\comb$ and therefore
\[d(x',y')< \comb d_2=d_1\] 
But this contradicts $d(x',y')\geq d_1$ because $x'\in X_{L}$ and $y'\in X_{L'}$ where $L$ and $L'$ are comparable.
\end{proof}

\medskip
\noindent
\textbf{Guessing how much to exploit points in multiple classes.} So far we have (1)~discussed how to select the subset $Z$ of input points and (2)~partitioned $Z$ such that we have some upper bound on the distance between
any two points in the same partition. In the non-overlapping case, we could
then argue it suffices to pick at most one point in each partition and adding
this point to the output set $\cS$ would increment $|\cS \cap \cU_i|$ for
exactly one value $i\in [m]$. In the overlapping case, however, we may need to
pick a point in a partition that is in multiple classes and would increment
$|\cS \cap \cU_i|$ for multiple values of $i$.

To get the reduction to network flow to generalize to the non-overlapping case
we need to guess values $c_{L}$ for every non-empty set $L \subset [m]$ and
require that we find at least $c_{L}$ points in $\cap_{i\in L} \cU_i$ such that
the $\sum_{L \subseteq [m]} c_{L}$ points returned are distinct. The fact the
points need to be distinct allows the reduction to go through. Note that to
satisfy the fairness requirements we need that $\sum_{L:i\in L} c_L\geq k_i$
for each $i$. 

\begin{example} 
    Suppose we require $k_1=2$ points from $\cU_1$ and $k_2=2$ points from
    $\cU_2$. Then the guess $c_{\{1\}}=2$ and $c_{\{2\}}=2$ would correspond to
    picking at least four distinct points, at least two from $\cU_1$ and at
    least two from $\cU_2$. In contrast, the guess $c_{\{1\}}=c_{\{2\}}=1$, and
    $c_{\{1,2\}}=1$ would correspond to picking at least three distinct points
    where at least one comes from each of sets $\cU_1, \cU_2, \cU_1\cap \cU_2$
    respectively.
\end{example}

There are at most  $k^{2^m-1-m}$ possible guesses\footnote{Recall that we typically consider $m$ to be a small constant. A bound of $k^{2^m-1}$ is immediate because there at most $2^m-1$ quantities. A slightly tighter bound follows by noting that $c_L$ for all singleton sets $L$ is implied once the other values are chosen.}  to try for the values and at least one is feasible since the optimal solution corresponds to some set of guesses. With a feasible set of guesses, we then essentially treat all sets $L\subseteq [m]$ as colors although when we need to pick  $c_L$ points of color $L$, it will suffice to pick points with color $L'$ if $L'$ is a subset of $L$.

The next theorem establishes that when the algorithm does not abort, the solution returned has diversity at least $\gamma/(3\comb-1)$ and that it never aborts if the guess $\gamma$ is at most the optimum diversity.

\begin{theorem}\label{thm:genFlow}
Let $\lsf$ be the optimum diversity. If $\gamma\leq \ell^*_{\text{fair}}$ then the algorithm returns a set of points of the required colors that are each $\geq \gamma/(3\comb-1)$ apart.  If $\gamma> \ell^*_{\text{fair}}$ then the algorithm either aborts or returns a set of points of the required colors that are each $\geq d_2=\gamma/(3\comb-1)$ apart.  
\end{theorem}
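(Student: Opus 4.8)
The plan is to follow the structure of the non-overlapping analysis (Algorithm~\ref{algo:m3}), splitting the argument into an easy direction (no abort implies a valid output) and a main existence argument (if $\gamma\leq\lsf$ some feasible guess avoids aborting), but now replacing the earlier group-width bound with Lemma~\ref{lem:groupwidthGen} and treating each non-empty $L\subseteq[m]$ as a ``color.'' Throughout, the correct reading of the statement is that the final algorithm enumerates all feasible guesses $c_L$ (those with $\sum_{L:i\in L}c_L\geq k_i$), so it suffices to exhibit one feasible guess for which Algorithm~\ref{algo:general} does not abort.

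First I would dispatch the easy direction, which handles the $\gamma>\lsf$ case and the ``does not abort'' half of the $\gamma\leq\lsf$ case. Whenever the algorithm does not abort, the capacity-$1$ edge $(v_j,b)$ forces at most one unit of flow through each $v_j$, so $\cS$ receives at most one point from each connected component $C_j$; since $G_Z$ has edges only between points $<d_2$ apart, any two points chosen from distinct components are $\geq d_2=\gamma/(3\comb-1)$ apart, giving $\diver(\cS)\geq d_2$. For a feasible guess the max flow saturates every source edge $a\to u_L$, routing $c_L$ units through distinct $v_j$'s; since the point added for each flow edge $(u_L,v_j)$ lies in $\bigcap_{i\in L}\cU_i$, class $i$ receives $\sum_{L:i\in L}c_L\geq k_i$ distinct points, so $\cS$ has the required colors.

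For the heart of the argument I would exploit the maximality in the construction of the $Z_L$'s. Because there is no budget of $k$ on $Z_L$ (in contrast to the non-overlapping algorithm, so that no critical/non-critical split is needed), for every $L$ and every $x\in X_L$ either $x\in Z_L$ or adding $x$ would violate the spacing condition, hence $x$ is within $d_1$ of some point of $\bigcup_{L'\supseteq L}Z_{L'}$. Let $O=\bigcup_L O_L$ be the optimal fair solution with $\diver(O)=\lsf\geq\gamma$, where $O_L=O\cap X_L$. For each $o\in O_L$ let $f(o)$ be the nearest point in $\bigcup_{L'\supseteq L}Z_{L'}$, so $d(o,f(o))<d_1$ and $f(o)\in Z_{L'(o)}$ for some $L'(o)\supseteq L$. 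I would then set the guess $c_{L'}=|\{o\in O:L'(o)=L'\}|$. This guess is feasible: since $L'(o)\supseteq L(o)$, every optimal point helping class $i$ is mapped to a set containing $i$, so $\sum_{L':i\in L'}c_{L'}\geq|O\cap\cU_i|\geq k_i$; moreover $\sum_{L'}c_{L'}=|O|\leq\sum_i k_i$, so it is one of the guesses the algorithm enumerates.

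Finally I would realize these $\sum_{L'}c_{L'}=|O|$ units as an integral flow by sending one unit along $a\to u_{L'(o)}\to v_{j(o)}\to b$, where $j(o)$ indexes the component containing $f(o)$; the middle edge exists because $f(o)\in Z_{L'(o)}\cap C_{j(o)}$. The one step requiring real work, and the main obstacle, is verifying that $o\mapsto C_{j(o)}$ is injective, so that each $v_j$ carries at most one unit and each $u_{L'}$ sends its $c_{L'}$ units along distinct edges. For this I would bound, for distinct $o_1,o_2\in O$, the distance $d(f(o_1),f(o_2))\geq d(o_1,o_2)-2d_1>\gamma-2\comb\gamma/(3\comb-1)=(\comb-1)d_2$, and invoke Lemma~\ref{lem:groupwidthGen}, which guarantees that two points in the same component are $<(\comb-1)d_2$ apart; hence $f(o_1)$ and $f(o_2)$ lie in different components. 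Integrality of max-flow with integer capacities then produces an integral flow of size $\sum_{L'}c_{L'}$, so the algorithm does not abort, and combined with the easy direction this establishes both claims of the theorem.
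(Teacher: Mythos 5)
Your proposal is correct and follows the same skeleton as the paper's proof: the same easy direction (no abort implies all selected points lie in distinct components, hence are $\geq d_2$ apart), and the same existence argument mapping each optimal point $o$ to a nearby representative $f(o)$ in $Z$ with $d(o,f(o))<d_1$, then using the bound $d(f(o_1),f(o_2))>\lsf-2d_1\geq(\comb-1)d_2$ together with Lemma~\ref{lem:groupwidthGen} to conclude that $f$ sends distinct optimal points to distinct connected components. The one place you genuinely diverge is the choice of the guess $\{c_L\}$: the paper sets $c_L=|O_L|$ with $O_L=O\cap X_L$ and routes the flow for $O_L$ through $u_L$, whereas you re-bucket each $o\in O_L$ according to the set $L'(o)\supseteq L$ for which $f(o)\in Z_{L'(o)}$ and set $c_{L'}=|\{o:L'(o)=L'\}|$. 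This is a real improvement in rigor: since the flow graph only has an edge $(u_L,v_j)$ when $Z_L\cap C_j\neq\emptyset$, and $f(o)$ for $o\in O_L$ may land in $Z_{L'}$ for a strict superset $L'\supsetneq L$, the paper's routing through $u_L$ is not guaranteed to use existing edges, while yours routes each unit through $u_{L'(o)}$, where the edge $(u_{L'(o)},v_{j(o)})$ is present by construction. Your feasibility check ($\sum_{L':i\in L'}c_{L'}\geq|O\cap\cU_i|\geq k_i$ because $L'(o)\supseteq L(o)$) ensures this modified guess still appears among those enumerated. In short: same approach, but your assignment of flow to source nodes patches a step the paper glosses over.
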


\begin{proof}
Note that if the algorithm does not abort  then all points are $\geq \gamma/(3\comb-1)$ apart since any two points in different connected components are $\geq \gamma/(3\comb-1)$ apart. 

Hence, it remains to argue that if $\gamma\leq \lsf$ then the algorithm does not abort. To argue this, we will show it is possible to construct a flow of size $\sum c_L$. And to do this it suffices to, for each $L \subset [m]$, identify $c_L$ different connected components that each include a point from $\cap_{i\in L} \cU_i$. 

Let $O = \bigcup_{L\subset [m]} O_L$ be an optimal solution where $O_L=O\cap X_L$ and let $c_L=|O_L|$. We will henceforth consider the iteration of the algorithm which guessed this set of $\{c_{L}\}_{L \subset [m]}$ values. For every point $x \in O$, let $f(x)$ be the closest point in $Z$ where for all $i$, \[x\in \cU_i \Rightarrow f(x)\in \cU_i\]

Note that this requirement ensures that if $x$ is replaced by $f(x)$ then all the fairness constraints are still satisfied. By construction of $Z$, $d(x,f(x))< d_1$. Hence, for any $x,y\in O$, 
\[d(x,y)> \lsf-2d_1 \geq \gamma-2\gamma \comb/(3\comb-1) =(\comb-1)d_2\]
and hence, by Lemma \ref{lem:groupwidthGen}, this implies that all points in 
$f(O)$ are in different connected components. This implies that  there  exist connected components with the necessary requirements. 
\end{proof}

\looseness-1
The rest of the algorithm and analysis follows similarly as Algorithm~\ref{algo:m3}, where we binary search for $\gamma$ in either a continuous or discrete space.  The running time is increased by a factor of $k^{2^m-m-1}$ because of the need to guess the values $\{c_L\}_{L \subset [m]}$; thus \fMMGFlow is a polynomial-time algorithm with a $\tfrac{1}{3 \binom{m}{\lfloor m/2\rfloor}-1}$-approximation guarantee.
	\section{Related Work} \label{sec:relWork} 
Diversity is an important principle in data selection and summarization,
facility location, recommendation systems and web search. The diversity models
that have been proposed in the literature can be organized into three main
categories, (1)~the distance-based models where the goal is to minimize the
\emph{similarity} of the elements within a set, (2)~the coverage-based models
where there exists a predetermined number of categories and the aim is to
maximize the \emph{coverage} of these categories~\cite{10.1145/1498759.1498766,
Munson2009SidelinesAA} and (3)~the novelty-based models that are defined so as
to minimize the \emph{redundancy} of the elements shown to the
user~\cite{10.1145/290941.291025}. For further information, we refer the reader
to the related surveys~\cite{article, Drosou:2010:SRD:1860702.1860709}.

Max-Min and Max-Sum diversification are two of the most well studied
distance-based models~\cite{Chandra:2001:AAD:374591.374596,
10.1145/1526709.1526761, Hassin:1997:AAM:2308449.2308622,
1994:HSC:2753204.2753214}, and there exist efficient algorithms with strong
approximation guarantees for the unconstrained version of the problems in the
offline setting (discussed in Sections~\ref{sec:intro}
and~\ref{sec:Background}). The problem of diversity maximization has also been
studied in the streaming and distributed settings, where (composable) core-sets
were shown to be a useful theoretical tool~\cite{Aghamolaei2015DiversityMV,
10.14778/3055540.3055541, 10.1145/2594538.2594560}, and more recently in the
sliding window setting~\cite{10.1145/3294052.3319701}. A separate line of work
focuses on designing efficient indexing schemes for result
diversification~\cite{10.1145/3375395.3387667,
6477041,10.14778/3192965.3192969}; this direction is orthogonal to our work,
and it is not clear how to extend existing indexing schemes for fair Max-Min
diversification.

There is relatively little prior work on constrained diversification. The
closest to our work is fair Max-Sum diversification (discussed in
Section~\ref{sec:intro}) and fair $k$-center
clustering (discussed in Section~\ref{sec:intro} and Appendix~\ref{sec:appTh}). To the best of our knowledge, our work is the first to augment the traditional Max-Min objective with fairness constraints. 

Prior work has also combined fairness with the determinant measure of diversity~\cite{pmlr-v80-celis18a}. That work models fairness constraints the same way as we do, but their algorithmic framework is entirely different. There, data is represented as vectors, and at each iteration the algorithm identifies the item that is most orthogonal to the current vector, which gets updated with the new item's projection. The limitation of this method is that it can only work in high-dimensional data (e.g., it would not work at all on one-dimensional data). Other work on diverse set selection focused on satisfying fairness constraints while optimizing an additive utility~\cite{Stoyanovich2018OnlineSS}. These methods do not apply to
our setting as Max-Min is not additive. Prior work has also examined the satisfaction of fairness constraints or preferences in specialized settings, such as rankings~\cite{Celis2017RankingWF, Yang:2019:BRD:3367722.3367886, YangS2017}. Work in this domain focuses on specifying and measuring fairness and augmenting ranking algorithms with fairness considerations. Related work on diverse top-$k$ results focuses on returning search results by a combined measure of relevance and dissimilarity to results already produced~\cite{AngelK2011, QinYC2012}.

Our fairness constraints are based on the definitions of \emph{group fairness}
and \emph{statistical parity}~\cite{Dwork:2012:FTA:2090236.2090255}. We do not
pick a particular definition of fairness, and do not place particular
restrictions on the values and distribution of $\langle k_1,\dots,k_m\rangle$.
This model can express equal and proportional representation, as well as any
other distribution. There are other, non-parity-based definitions of fairness
that fall outside our framework. For example, \emph{individual or causal
fairness}~\cite{GalhotraBM2017} examine differences in treatment of individuals
from different groups who are otherwise very similar, but these are not the
focus of this work.
	\section{Summary and Future Directions} \label{sec:conclusions} 
In this paper, we focused on the problem of diverse data selection under
fairness constraints. 
To the best of our knowledge, our work is the first to
introduce fairness constraints to Max-Min diversification. We studied both
cases of disjoint and overlapping groups and proposed novel polynomial
algorithms with strong approximation guarantees. For the case of disjoint
groups, our algorithms have linear running time with respect to the size of the
data. Overall, our work augments in significant ways the existing literature of traditional problems that have been studied under group fairness constraints. We discuss here some possible directions that extend our work through the
exploration of problem variants, or intuitions towards improvement of the known
algorithms and bounds.

\smallskip
\noindent 
\textbf{Improved bounds.}
An interesting open question is whether an $\tfrac{1}{2}$ approximation for
\fMM is possible, as is the case for Max-Min and fair Max-Sum diversification.
In Section~\ref{sec:2.2}, we discussed the correspondence between
fairness constraints and partition matroids. It is possible that results
relevant to matroids can be exploited to improve the algorithms and bounds for
the \fMM problem.

\smallskip
\noindent 
\textbf{Extending the swap algorithm to the general case.} 
Our \fMMSwap algorithm provides a better bound compared to our \fMMFlow
algorithm for the case of $m=2$ ($\tfrac{1}{4}$ and $\tfrac{1}{5}$
respectively). This indicates the possibility that the swap algorithm, if
extended to the general case, could perhaps result in a better bound than
\fMMFlow.

\smallskip
\noindent 
\textbf{Problem variants.}
Our algorithms aim to approximate the diversity score of the optimal solution
to \fMM, while guaranteeing the satisfaction of the fairness constraints. A
possible problem variant could explore the relaxation of the fairness
constraints, and seek to minimize their violation while guaranteeing a
diversity score at least as good as the solution to unconstrained Max-Min
diversification.
Another interesting future direction is to study the fair variant of other
diversity objectives proposed in the literature~\cite{Chandra:2001:AAD:374591.374596, 10.1145/2594538.2594560},
for which there are currently no known results.

\subsection*{Acknowledgements} This work was supported by the NSF under grants  CCF-1934846, CCF-1908849, CCF-1637536, IIS-1453543, CCF-1763423, and IIS-1943971.

	\bibliographystyle{abbrv}
	\bibliography{literature}

\begin{thebibliography}{10}

\bibitem{Abbassi:2013:DMU:2487575.2487636}
Z.~Abbassi, V.~S. Mirrokni, and M.~Thakur.
\newblock Diversity maximization under matroid constraints.
\newblock In {\em KDD '13}, pages 32--40, 2013.

\bibitem{10.1145/3375395.3387667}
P.~K. Agarwal, S.~Sintos, and A.~Steiger.
\newblock Efficient indexes for diverse top-k range queries.
\newblock In {\em PODS '20}, page 213–227, 2020.

\bibitem{Aghamolaei2015DiversityMV}
S.~Aghamolaei, M.~Farhadi, and H.~Zarrabi-Zadeh.
\newblock Diversity maximization via composable coresets.
\newblock In {\em CCCG}, 2015.

\bibitem{10.1145/1498759.1498766}
R.~Agrawal, S.~Gollapudi, A.~Halverson, and S.~Ieong.
\newblock Diversifying search results.
\newblock In {\em WSDM '09}, page 5–14, 2009.

\bibitem{AngelK2011}
A.~Angel and N.~Koudas.
\newblock Efficient diversity-aware search.
\newblock In {\em SIGMOD ’11}, page 781–792, 2011.

\bibitem{10.5555/3157382.3157556}
A.~Bhaskara, M.~Ghadiri, V.~Mirrokni, and O.~Svensson.
\newblock Linear relaxations for finding diverse elements in metric spaces.
\newblock In {\em NIPS’16}, page 4105–4113, 2016.

\bibitem{10.1145/3294052.3319701}
M.~Borassi, A.~Epasto, S.~Lattanzi, S.~Vassilvitskii, and M.~Zadimoghaddam.
\newblock Better sliding window algorithms to maximize subadditive and
  diversity objectives.
\newblock In {\em PODS '19}, page 254–268, 2019.

\bibitem{10.1145/3086464}
A.~Borodin, A.~Jain, H.~C. Lee, and Y.~Ye.
\newblock Max-sum diversification, monotone submodular functions, and dynamic
  updates.
\newblock {\em ACM Trans. Algorithms}, 2017.

\bibitem{Borodin:2012:MDM:2213556.2213580}
A.~Borodin, H.~C. Lee, and Y.~Ye.
\newblock Max-sum diversification, monotone submodular functions and dynamic
  updates.
\newblock In {\em PODS '12}, pages 155--166, 2012.

\bibitem{10.1145/290941.291025}
J.~Carbonell and J.~Goldstein.
\newblock The use of mmr, diversity-based reranking for reordering documents
  and producing summaries.
\newblock In {\em SIGIR ’98}, page 335–336, 1998.

\bibitem{Ceccarello:2018:FCD:3159652.3159719}
M.~Ceccarello, A.~Pietracaprina, and G.~Pucci.
\newblock Fast coreset-based diversity maximization under matroid constraints.
\newblock In {\em WSDM '18}, pages 81--89, 2018.

\bibitem{10.14778/3055540.3055541}
M.~Ceccarello, A.~Pietracaprina, G.~Pucci, and E.~Upfal.
\newblock Mapreduce and streaming algorithms for diversity maximization in
  metric spaces of bounded doubling dimension.
\newblock {\em Proc. VLDB Endow.}, page 469–480, 2017.

\bibitem{pmlr-v80-celis18a}
E.~Celis, V.~Keswani, D.~Straszak, A.~Deshpande, T.~Kathuria, and N.~Vishnoi.
\newblock Fair and diverse {DPP}-based data summarization.
\newblock In {\em ICML '2018}, pages 716--725, 2018.

\bibitem{Celis2017RankingWF}
L.~E. Celis, D.~Straszak, and N.~K. Vishnoi.
\newblock Ranking with fairness constraints.
\newblock In {\em ICALP}, 2017.

\bibitem{10.5555/3039686.3039695}
A.~Cevallos, F.~Eisenbrand, and R.~Zenklusen.
\newblock Local search for max-sum diversification.
\newblock In {\em SODA ’17}, page 130–142, 2017.

\bibitem{Chandra:2001:AAD:374591.374596}
B.~Chandra and M.~M. Halld\'{o}rsson.
\newblock Approximation algorithms for dispersion problems.
\newblock {\em J. Algorithms}, pages 438--465, 2001.

\bibitem{Chen2016}
D.~Z. Chen, J.~Li, H.~Liang, and H.~Wang.
\newblock Matroid and knapsack center problems.
\newblock {\em Algorithmica}, pages 27--52, 2016.

\bibitem{Chierichetti:2017:FCT:3295222.3295256}
F.~Chierichetti, R.~Kumar, S.~Lattanzi, and S.~Vassilvitskii.
\newblock Fair clustering through fairlets.
\newblock In {\em NIPS'17}, pages 5036--5044, 2017.

\bibitem{pmlr-v89-chierichetti19a}
F.~Chierichetti, R.~Kumar, S.~Lattanzi, and S.~Vassilvtiskii.
\newblock Matroids, matchings, and fairness.
\newblock PMLR '19, 2019.

\bibitem{chiplunkar2020solve}
A.~Chiplunkar, S.~Kale, and S.~N. Ramamoorthy.
\newblock How to solve fair $k$-center in massive data models, 2020.

\bibitem{diversityThought1}
A.~N. Diaz-Uda, C.~Medina, and B.~Schill.
\newblock Diversity’s new frontier: Diversity of thought and the future of
  the workforce.
\newblock Deloitte Insights, 2013.

\bibitem{article}
M.~Drosou, H.~Jagadish, E.~Pitoura, and J.~Stoyanovich.
\newblock Diversity in big data: A review.
\newblock {\em Big Data}, 5:73--84, 2017.

\bibitem{Drosou:2010:SRD:1860702.1860709}
M.~Drosou and E.~Pitoura.
\newblock Search result diversification.
\newblock {\em SIGMOD Rec.}, (1):41--47, 2010.

\bibitem{6477041}
M.~{Drosou} and E.~{Pitoura}.
\newblock Diverse set selection over dynamic data.
\newblock {\em IEEE Transactions on Knowledge and Data Engineering},
  26(5):1102--1116, 2014.

\bibitem{Dwork:2012:FTA:2090236.2090255}
C.~Dwork, M.~Hardt, T.~Pitassi, O.~Reingold, and R.~Zemel.
\newblock Fairness through awareness.
\newblock In {\em ITCS '12}, pages 214--226, 2012.

\bibitem{ERKUT199048}
E.~Erkut.
\newblock The discrete p-dispersion problem.
\newblock {\em European Journal of Operational Research}, 46(1):48 -- 60, 1990.

\bibitem{GalhotraBM2017}
S.~Galhotra, Y.~Brun, and A.~Meliou.
\newblock Fairness testing: Testing software for discrimination.
\newblock In {\em ESEC/FSE '17}, pages 498--510, 2017.

\bibitem{10.1145/1526709.1526761}
S.~Gollapudi and A.~Sharma.
\newblock An axiomatic approach for result diversification.
\newblock In {\em WWW ’09}, page 381–390, 2009.

\bibitem{Gonzalez1985ClusteringTM}
T.~F. Gonzalez.
\newblock Clustering to minimize the maximum intercluster distance.
\newblock {\em Theor. Comput. Sci.}, 38:293--306, 1985.

\bibitem{Hassin:1997:AAM:2308449.2308622}
R.~Hassin, S.~Rubinstein, and A.~Tamir.
\newblock Approximation algorithms for maximum dispersion.
\newblock {\em Oper. Res. Lett.}, 21(3):133--137, Oct. 1997.

\bibitem{diversityThought2}
V.~Hunt, D.~Layton, and S.~Prince.
\newblock Why diversity matters.
\newblock McKinsey \& Company, 2015.

\bibitem{10.1145/2594538.2594560}
P.~Indyk, S.~Mahabadi, M.~Mahdian, and V.~S. Mirrokni.
\newblock Composable core-sets for diversity and coverage maximization.
\newblock In {\em PODS ’14}, page 100–108, 2014.

\bibitem{Jones2020FairKV}
M.~Jones, H.~L. Nguyen, and T.~M. Nguyen.
\newblock Fair k-centers via maximum matching.
\newblock In {\em ICML'20}, 2020.

\bibitem{pmlr-v97-kleindessner19a}
M.~Kleindessner, P.~Awasthi, and J.~Morgenstern.
\newblock Fair k-center clustering for data summarization.
\newblock In {\em ICML '19}, volume~97, pages 3448--3457, 09--15 Jun 2019.

\bibitem{doi:10.1111/j.1538-4632.1987.tb00133.x}
M.~J. Kuby.
\newblock Programming models for facility dispersion: The p-dispersion and
  maxisum dispersion problems.
\newblock {\em Geographical Analysis}, 19(4):315--329, 1987.

\bibitem{Litman2006EvaluatingTE}
T.~Litman.
\newblock Evaluating transportation equity: Guidance for incorporating
  distributional impacts in transportation planning.
\newblock 2020.

\bibitem{Munson2009SidelinesAA}
S.~A. Munson, D.~X. Zhou, and P.~Resnick.
\newblock Sidelines: An algorithm for increasing diversity in news and opinion
  aggregators.
\newblock In {\em ICWSM}, 2009.

\bibitem{nomikos}
C.~Nomikos, A.~Pagourtzis, and S.~Zachos.
\newblock Randomized and approximation algorithms for blue-red matching.
\newblock 2007.

\bibitem{Orlin13}
J.~B. Orlin.
\newblock Max flows in o(nm) time, or better.
\newblock In {\em STOC'13}, pages 765--774, 2013.

\bibitem{orlin}
J.~B. Orlin and X.~Gong.
\newblock A fast max flow algorithm.
\newblock {\em CoRR}, abs/1910.04848, 2019.

\bibitem{QinYC2012}
L.~Qin, J.~X. Yu, and L.~Chang.
\newblock Diversifying top-k results.
\newblock {\em Proc. VLDB Endow.}, 5(11):1124–1135, July 2012.

\bibitem{1994:HSC:2753204.2753214}
S.~S. Ravi, D.~J. Rosenkrantz, and G.~K. Tayi.
\newblock Heuristic and special case algorithms for dispersion problems.
\newblock {\em Oper. Res.}, 42(2):299--310, Apr. 1994.

\bibitem{schrijver2003combinatorial}
A.~Schrijver.
\newblock {\em Combinatorial optimization: polyhedra and efficiency},
  volume~24.
\newblock Springer Science \& Business Media, 2003.

\bibitem{Stoyanovich2018OnlineSS}
J.~Stoyanovich, K.~Yang, and H.~V. Jagadish.
\newblock Online set selection with fairness and diversity constraints.
\newblock In {\em EDBT}, 2018.

\bibitem{Tamir}
A.~Tamir.
\newblock Obnoxious facility location on graphs.
\newblock {\em SIAM J. Discrete Math.}, 4:550--567, 11 1991.

\bibitem{10.14778/3192965.3192969}
Y.~Wang, A.~Meliou, and G.~Miklau.
\newblock Rc-index: Diversifying answers to range queries.
\newblock {\em Proc. VLDB Endow.}, 11(7):773–786, 2018.

\bibitem{Yang:2019:BRD:3367722.3367886}
K.~Yang, V.~Gkatzelis, and J.~Stoyanovich.
\newblock Balanced ranking with diversity constraints.
\newblock In {\em IJCAI'19}, pages 6035--6042, 2019.

\bibitem{YangS2017}
K.~Yang and J.~Stoyanovich.
\newblock Measuring fairness in ranked outputs.
\newblock In {\em SSDBM ’17}, 2017.

\bibitem{Zehlike:2017:FFT:3132847.3132938}
M.~Zehlike, F.~Bonchi, C.~Castillo, S.~Hajian, M.~Megahed, and R.~Baeza-Yates.
\newblock Fa*ir: A fair top-k ranking algorithm.
\newblock In {\em CIKM '17}, pages 1569--1578, 2017.

\end{thebibliography}
	
	\appendix
	\section*{Appendix}
	\section{Results on fair $k$-center clustering} \label{sec:appTh}
In this paper, our primary focus has been on fair diversification using the
Max-Min objective. In Section~\ref{sec:intro}, we discussed how the
unconstrained Max-Min diversification and the $k$-center clustering problems
are closely-related; notably, the best approximation algorithms for the
unconstrained variants of both problems are essentially equivalent and result
in the same approximation bound. In this section, we formally define the $k-$
center clustering, introduce its fair variant and discuss the known
approximation results for this problem. We then explore how algorithms and
intuitions from our work on fair Max-Min diversification can be adapted towards
the fair $k$-center clustering problem.

\smallskip

\noindent
\textbf{The $k$-center and fair $k$-center clustering problems.} 
The objective of $k$-center clustering is to identify $k$ cluster centers, such
that the maximum distance of any point in the universe of elements
$\mathcal{U}$ from its closest cluster center is minimized. This maximum
distance is referred to as the \emph{clustering radius}. More formally, given a
distance metric $d$, $k$-center clustering is expressed by the following
minimization problem: \[\underset{\mathcal{S}\subseteq\mathcal{U}, |\mathcal{S}|=k}{\text{minimize}} \;\; \max_{u\in\mathcal{U}}d(u,\mathcal{S}) \] 
where $d(u,\mathcal{S}) = \min_{s\in\mathcal{S}}d(u,s)$.
Note that this objective does not preclude cluster centers from being close to each other, and in fact an optimal solution to $k$-center clustering could be arbitrarily bad for Max-Min diversification.

\smallskip

\noindent
\textbf{Algorithms and approximations.}
Just like Max-Min diversification, $k$-center clustering is NP-complete. The
greedy approximation algorithm proposed by
Gonzalez~\cite{Gonzalez1985ClusteringTM} is essentially equivalent to \gonzalez
(Algorithm~\ref{algo:GMMA}) and provides a 2-approximation with linear running
time.  

Notably, there is recent work that augments the problem with fairness
constraints~\cite{pmlr-v97-kleindessner19a}: Given $m$ non-overlapping classes
in $\mathcal{U}=\cup_{i=1}^m\mathcal{U}_i$ and non-negative integers $\langle
k_1, \dots, k_m\rangle$, the goal is to derive a set of cluster centers
$\mathcal{S}$, such that $|\mathcal{S}\cap\mathcal{U}_i|=k_i$. The fair
k-center clustering problem can also be expressed by a partition matroid, for
which Chen et al.~\cite{Chen2016} provide a 3-approximation with a quadratic
runtime. Kleindessner et al.~\cite{pmlr-v97-kleindessner19a} provide a
linear-time 5-approximation algorithm for the case of two classes ($m=2$), and
a linear-time $\left(3\cdot 2^{m-1}-1\right)$-approximation for the general
case, a result recently improved to $3(1+\epsilon)$ by Chiplunkar et
al.~\cite{chiplunkar2020solve} and to 3-approximation by Jones et al.~\cite{Jones2020FairKV}.

\smallskip

\noindent
In Section~\ref{sec:clustering}, we adapt the flow algorithm for fair Max-Min diversification, and provide a linear-time $3$-approximation for fair $k$-center clustering. (noting that the three results were derived independently.)

\subsection{Fair $k$-center clustering}\label{sec:clustering}

We show how we can adapt our \fMMFlow algorithm (Algorithm~\ref{algo:m3}) and
design a constant factor 3-approximation for fair $k$-center clustering with
linear running time. 

\smallskip
\noindent
\textbf{Basic algorithm.} 
We start by presenting a basic algorithm that takes as input a guess $\gamma$
for the optimum fair clustering radius. If this guess is less than the optimum
fair clustering radius $\rsf$ then the algorithm may abort but otherwise it
will return a fair clustering with radius at most $3\gamma$.

\smallskip

\noindent
\textbf{Algorithm and intuition.} 
The basic idea behind \fClust (Algorithm~\ref{algo:clustering}) is to construct
a set of points $Y=\{y_1, \ldots, y_t\}$ where all distances between these
points are $> 2\gamma$ apart and all points not in this set are $\leq 2\gamma$
from some point in $Y$; this can be done via the \gonzalez algorithm
(lines~\ref{ln:clustGMM} and~\ref{ln:clustT}). The fact that each pair is $>
2\gamma$ apart implies that any $k$-center clustering, fair or otherwise, with
covering radius $\leq \gamma$ has the property that at least one center must be
within a distance $\gamma$ from each $y_i$ and that no center is within
distance $\gamma$ of two points $y_i, y_j$ since, by appealing to the triangle
inequality, this would violate the fact that $d(y_i,y_j)>2\gamma$.

The algorithm constructs a sets $C_1, \ldots , C_t$ such that we will be able
to argue that if we can pick a fair set of cluster centers from $C_1\cup \ldots
\cup C_t$ such that \emph{exactly} one point is picked in each $C_j$ then we
get a clustering with cluster radius $3\gamma$. Furthermore, if $\gamma\geq
\rsf$, such a set of centers can be proven to exist. We will then be able to
find these centers via a reduction to network flow. The network constructed is
the same as in Algorithm \ref{algo:m3} although the $C_j$ sets in that
algorithm are constructed differently. The only difference is that because we
need exactly one point in each of $C_1, C_2, \ldots, C_t$, we need to find a
flow of size $t$ rather than a flow of size $k$. Note that if we are able to
construct a flow of $t \leq k$, we can arbitrarily add the cluster centers
missing from a class $i \in [m]$ without affecting the clustering radius of the
solution.

\newcommand{\Clust}{
	\begin{algorithm}[H] 
		\caption{\fClust: Fair $k$-Center Clustering}\label{algo:clustering} 
		{\small
			\begin{algorithmic}[1]
				\Statex 
				\begin{description} 
					\item[\rlap{Input:}\phantom{Output:}]
					$\cU_1, \ldots, \cU_m$: Universe of available elements 
					\item[\phantom{Output:}] $k_1, \ldots, k_m \in \mathbb{Z}^{+}$
					\item[\phantom{Output:}] $\gamma\in \mathbb{R}$: A guess of  					optimum fair clustering radius.
					\item[Output:] $k_i$ points in $\cU_i$ for each $i\in [m]$ 
					\end{description} 
					\Procedure{\fClust}{}
					\State $Y=\{y_1, \ldots, y_{k+1}\} 					\leftarrow	\gonzalez(\cU,\emptyset, k+1)$ \label{ln:clustGMM}
					
					\For{$j\in [k]$}
						\State $D_j \leftarrow \{\argmin_{x\in \cU_i} d(x,y_j): i\in 						[m]\} $
					\EndFor
					
					\If{$d(y_{k+1}, \{y_1,\ldots, y_k\})>2\gamma$}
						\Return $\emptyset$ \Comment{Abort} \label{ln:ClustAbort}
					\Else
						\State $Y=\{y_1, \ldots, y_t\}$ with minimum  $t\leq k$ such 						that \label{ln:clustT}
						\Statex\hspace{\algorithmicindent}\hspace{5em} $d(y_{t+1}, 						\{y_1,\ldots, y_t\})\leq 2\gamma$  
					\EndIf
					
					\For{$j\in [t]$}
						\State $C_j \leftarrow \{x\in D_j: d(x,y_j)\leq \gamma\}$
					\EndFor
					\LineCommentx{Construct flow graph}
					\State Construct directed graph $G=(V,E)$ where
					\begin{eqnarray*}
						V&=& \{a,u_1, \ldots, u_m, v_1, \ldots, v_t, b\}  \\
						E&=& \{(a,u_i) \mbox{ with capacity $k_i$}: i \in [m]\} \\
						& & \cup~ \{(v_j,b) \mbox{ with capacity $1$}: j \in [t]\} \\
						& & \cup~  \{(u_i,v_j) \mbox{ with capacity $1$}:  |Z_i\cap 						C_j|\geq 1\} 
						\end{eqnarray*} 

					\State Compute max $a$-$b$ flow. 
					\If{flow size $<t$}
						\Return $\emptyset$ \Comment{Abort}
					\Else \Comment{max flow is $t$}
						\State $\forall(u_i,v_j)$ with flow add a node in $C_j$ with 						color $i$ to $\mathcal{S}$
					\EndIf
						\Return $\mathcal{S}$
					\EndProcedure
			\end{algorithmic}
		}
\end{algorithm}
}

\begin{figure}[t]
	\centering
	\scalebox{0.85}{
	\begin{minipage}[t]{\textwidth}
		\Clust
	\end{minipage}}
\end{figure}	

\begin{theorem}
If $\gamma\geq \rsf$ then the above algorithm returns a fair clustering with radius at most $3\gamma$. If $\gamma<\rsf$ then either the algorithm aborts or it returns a fair clustering with radius at most $3\gamma$. 
\end{theorem}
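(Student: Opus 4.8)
The plan is to split the statement into two independent parts: a \emph{correctness} claim---whenever \fClust does not abort it outputs a fair clustering of radius at most $3\gamma$---and a \emph{completeness} claim---if $\gamma\geq\rsf$ then neither abort is triggered. Correctness holds for every $\gamma$, so it immediately supplies the ``radius at most $3\gamma$'' conclusion in both sentences of the theorem; combining it with completeness yields the first sentence in full.

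For correctness, the key structural fact comes from the farthest-first traversal in line~\ref{ln:clustGMM}. I would first recall the standard property of \gonzalez: the quantities $r_j:=d(y_{j+1},\{y_1,\ldots,y_j\})$ are non-increasing, and consequently the prefix $\{y_1,\ldots,y_t\}$ chosen in line~\ref{ln:clustT} satisfies two things at once---every point of $\cU$ lies within $2\gamma$ of some $y_j$ (because $r_t\leq 2\gamma$ and $y_{t+1}$ realizes the farthest distance), while $y_1,\ldots,y_t$ are pairwise $>2\gamma$ apart (because $t$ is minimal, so $r_{t-1}>2\gamma$). Since the returned flow has value $t$, exactly one center is selected from each $C_j$, and by definition of $C_j$ that center is within $\gamma$ of $y_j$. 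A single application of the triangle inequality then gives, for any $x\in\cU$ covered by $y_j$, a selected center at distance at most $2\gamma+\gamma=3\gamma$. Finally, because the capacities on the $(a,u_i)$ edges are $k_i$, the flow picks at most $k_i$ centers of color $i$; any color short of $k_i$ is padded with arbitrary points of that color, which can only decrease covering distances, so fairness holds with equality and the radius bound survives.

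For completeness, assume $\gamma\geq\rsf$ and fix an optimal fair clustering $O$ of radius $\rsf\leq\gamma$ with exactly $k_i$ centers of color $i$. To rule out the first abort (line~\ref{ln:ClustAbort}), I would argue by contradiction: if $d(y_{k+1},\{y_1,\ldots,y_k\})>2\gamma$, then by the ordering property the $k+1$ points $y_1,\ldots,y_{k+1}$ are pairwise $>2\gamma$ apart, so no single center of $O$ can be within $\gamma$ of two of them; since every $y_j$ must be within $\rsf\leq\gamma$ of some center, this forces $|O|\geq k+1$, contradicting $|O|=k$. To rule out the second abort I would exhibit a feasible flow of value $t$: for each $j\in[t]$ pick a center $o(y_j)\in O$ with $d(y_j,o(y_j))\leq\gamma$, and note that the same pairwise separation of the $y_j$'s makes $j\mapsto o(y_j)$ injective. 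If $o(y_j)$ has color $i$, then the closest color-$i$ point to $y_j$ lies within $\gamma$ of $y_j$ and hence belongs to $C_j$, so the edge $(u_i,v_j)$ exists; routing one unit along it for each $j$ sends at most $k_i$ units through $u_i$ by injectivity and the fact that $O$ has only $k_i$ color-$i$ centers, respecting the capacities and producing a flow of value $t$.

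The main obstacle is this second-abort argument, where the reduction to flow must be validated: it requires three ingredients to line up simultaneously---the injectivity of $j\mapsto o(y_j)$ (from pairwise separation), the guarantee that $D_j$ actually contains a point of the right color within distance $\gamma$ (with $o(y_j)$ serving as the witness), and the capacity bookkeeping that translates ``$O$ has exactly $k_i$ centers of color $i$'' into the capacity $k_i$ on $(a,u_i)$. Everything else---the triangle-inequality radius bound and the first-abort packing argument---is routine once the \gonzalez ordering property is in hand.
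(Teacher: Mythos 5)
Your proposal is correct and follows essentially the same route as the paper's proof: the triangle-inequality argument ($2\gamma+\gamma$) for the radius bound whenever the algorithm does not abort, the packing argument ruling out the first abort, and the construction of a value-$t$ flow from the optimal solution (the paper phrases your $o(y_j)$ step via the balls $E_j=\{x:d(x,y_j)\leq\gamma\}$ and the observation that $E_j$ contains a color-$i$ point iff $C_j$ does). Your write-up simply makes explicit some details the paper leaves implicit, such as the monotonicity of the \gonzalez distances, the injectivity of $j\mapsto o(y_j)$, and the padding when fewer than $k_i$ centers of color $i$ are selected.
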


\begin{proof}
Note that if the algorithm does not abort, the algorithm identifies exactly one point in each of the disjoint sets $C_1, \ldots, C_t$ such that at most $k_i$ points of color $i$ are chosen for each color $i\in [m]$. Since the algorithm did not abort at Step 3 we know that all points in $\cU$ are within distance $2\gamma$ of some point $y_i$ and hence at most distance $2\gamma+\gamma$ from the selected point in $C_i$. Hence, we return a fair clustering with covering radius at most $3\gamma$ as required.

It remains to show that if  $\gamma\geq \rsf$ then the algorithm does not abort. The algorithm does not abort at line~\ref{ln:ClustAbort} since this would imply there exist $k+1$ points that are $>2\gamma$ from each other and this implies $\rsf>\gamma$. Define $E_j=\{x:d(x,y_j)\leq \gamma\}$ and note that the optimum solution must pick a point in each $E_j$ since otherwise $y_j$ is not covered within distance $\gamma$. Hence, we know it is possible to pick at most $k_j$ points of color $j$ such that exactly one point $c_j$ is picked in each $E_j$.  Note that $E_j$ has a point of color $i$ iff $C_j$ has a point of color $i$. Hence, it is also possible to pick at most $k_i$ points of color $i$ (for each $i\in [m]$) such that exactly one point $c_j$ is picked in each $C_j$. Hence, there exists a flow of size $t$ where $(u_i,v_j)$ has flow 1 iff $c_j$ has color $i$ and all edges into $b$ are saturated.
\end{proof}

\smallskip

\noindent
\textbf{Final algorithm.}
We now proceed as in the case of \fMMFlow (Section~\ref{sec:flowAlg}): we can
either binary search for the good $\gamma$ over the continuous range
$[d_{\min}, d_{\max}]$ or over the discrete set of all distances between points
in $Y\cup D_1 \cup D_2 \cup \ldots \cup D_m$. In the first case, we need
$O(\log \log_{1+\epsilon} d_{\max}/d_{\min})$ instantiations of the basic
algorithm before we find a clustering with approximation ratio $3(1+\epsilon)$.
In the second case, we need to sort $O(k^2 m^2)$ distances and then need
$O(\log k)$ instantiations.

\begin{theorem}
There is a 3-approximation for fair $k$-center clustering with running time $O(kn+m^2 k^2 \log k)$.
\end{theorem}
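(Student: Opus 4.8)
The plan is to follow the same template as the final diversification result of Section~\ref{sec:flowAlg}: run the $\gamma$-independent preprocessing once, collapse the continuum of possible guesses into a discrete candidate set of $O(k^2m^2)$ values, and binary search over that set, invoking the basic algorithm and its guarantee (the preceding theorem) at each probed value. The approximation factor $3$ is then inherited from the basic algorithm, and the stated running time comes from doing the expensive preprocessing only once.

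First I would bound the preprocessing. Computing $Y=\{y_1,\dots,y_{k+1}\}$ with \gonzalez costs $O(kn)$. The sets $D_1,\dots,D_k$ can be built in a single sweep: since the classes are disjoint, each of the $n$ points has exactly one color $i$, so charging that point the $k$ distances $d(\cdot,y_j)$ and maintaining, for every pair $(i,j)$, the running closest point, produces all $D_j$ in $O(kn)$ time. Crucially neither $Y$ nor the $D_j$ depend on $\gamma$, so this $O(kn)$ work is performed only once.

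Next I would handle the guess. The only way $\gamma$ enters the basic algorithm is through the test $d(y_{k+1},\{y_1,\dots,y_k\})>2\gamma$ on line~\ref{ln:ClustAbort}, the choice of the prefix length $t$ (a comparison of $2\gamma$ against distances among the $y_j$), and the membership tests $d(x,y_j)\le\gamma$ defining the $C_j$. Every distance in these tests is a pairwise distance among the $O(km)$ points of $Y\cup D_1\cup\dots\cup D_k$, so the algorithm's behaviour is constant between consecutive values of the $O(k^2m^2)$ such distances (and their halves); it therefore suffices to search guesses drawn from this set. Sorting it costs $O(k^2m^2\log(km))=O(k^2m^2\log k)$ since $k\ge m$. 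For each probed $\gamma$, recomputing $t$ and the $C_j$ is $O(km)$, and the flow network has $O(m+k)$ nodes and $O(mk)$ edges, so each max-flow computation costs $O(k^2m^2)$ by the cited solvers~\cite{Orlin13,orlin}. With $O(\log k)$ probes this contributes $O(k^2m^2\log k)$, giving the claimed $O(kn+k^2m^2\log k)$ total.

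For correctness I would argue that the search returns a candidate $\gamma\le\rsf$ at which the algorithm does not abort, so that the preceding theorem yields clustering radius $\le 3\gamma\le 3\rsf$. Two facts drive this: the preceding theorem guarantees no abort for every $\gamma\ge\rsf$, and the ``no abort'' predicate is monotone in $\gamma$, because enlarging $\gamma$ only shrinks the prefix length $t$ (fewer sinks to saturate) and enlarges each $C_j$ (more colors available per component), both of which can only relax feasibility of the size-$t$ flow, while also making the line~\ref{ln:ClustAbort} test easier to pass. Combined with piecewise-constancy, monotonicity implies that the smallest non-aborting candidate lies at or below $\rsf$, and binary search locates it in $O(\log k)$ probes. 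I expect the main obstacle to be justifying that restricting to the discrete candidate set loses nothing — i.e., verifying both the monotonicity of the flow's feasibility as $t$ and the $C_j$ move together with $\gamma$, and that the relevant abort threshold is itself one of the precomputed distances; once these are in hand the rest is routine bookkeeping.
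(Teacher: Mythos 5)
Your proposal is correct and takes essentially the same route as the paper: a one-time $O(kn)$ preprocessing to build $Y$ and the $D_j$, discretization of the guess $\gamma$ to the $O(k^2m^2)$ pairwise distances among $Y\cup D_1\cup\dots\cup D_k$, and $O(\log k)$ invocations of the basic algorithm via binary search, inheriting the factor $3$ from the preceding theorem. You in fact supply more detail than the paper's terse proof --- notably the inclusion of half-distances in the candidate set and the monotonicity of the no-abort predicate in $\gamma$ --- and both points are valid.
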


\begin{proof}
Note that $Y$ and $D_1, D_2, \ldots, D_m$ can be computed in $O(kn)$ time. The
flow instance has $O(k)$ nodes and $O(mk)$ edges. Hence, it can be solved in
$O(mk^2)$ time \cite{orlin,Orlin13}. The total running time is therefore
$O(kn+m^2 k^2 \log k+mk^2 \log k)$ as required.
\end{proof}

Prior work has proposed another 3-approximation algorithm for fair $k$-center
clustering~\cite{Chen2016}, but the running time of that algorithm is quadratic
with respect to the data size ($n$). Our 3-approximation algorithm for \fClust adapts the ideas for fair Max-Min diversification, offering an alternative approach to the recent linear-algorithm with a $3(1 +\epsilon)$-approximation guarantee by Chiplunkar et al.~\cite{chiplunkar2020solve}, and to the linear-algorithm with a $3$- approximation guarantee by Jones et al.~\cite{Jones2020FairKV}.

\end{document}